\newtheorem{theorem}{Theorem}[section]
\newtheorem{corollary}{Corollary}[theorem]
\newcolumntype{H}{>{\setbox0=\hbox\bgroup}c<{\egroup}@{}}
\DeclareMathOperator*{\argmin}{arg\,min}
\def\hlinewd#1{%
\noalign{\ifnum0=`}\fi\hrule \@height #1 %
\futurelet\reserved@a\@xhline} 
\DeclareMathOperator*{\E}{\mathbb{E}}
\DeclareMathOperator*{\I}{\mathbb{I}}
\DeclareMathOperator*{\Loss}{\mathcal{L}}
\DeclareMathOperator*{\LossBCE}{\mathcal{L}_{BCE}}
\DeclareMathOperator*{\LossGBCE}{\ensuremath{\mathcal{L}^{\beta}_{gBCE}}}
\newcommand\LossGBCEb[1]{\mathcal{L}^{#1}_{gBCE}}
\newcommand{\conditionalnewline}[1]{%
  \ifthenelse{\equal{#1}{manuscript}}{%
    \\%
  }{%
    \ifthenelse{\equal{#1}{sigconf}}{%
      \\%
    }{}%
  }%
}
\renewcommand{\qedsymbol}{$\blacksquare$}
\renewcommand{\qed}{%
  \ifmmode
    \mathqed
  \else
    \leavevmode\unskip\penalty9999 \hbox{}\nobreak\hfill
    \quad\hbox{\qedsymbol}
  \fi
}
\newcommand{\ourmodel}{gSASRec}
\newcommand{\ourloss}{gBCE}
\definecolor{brownweb}{rgb}{0.65, 0.16, 0.16}
\newcommand{\dd}[1]{\textcolor{black}{#1}}
\newcommand{\sdd}[1]{\textcolor{black}{#1}}
\newcommand{\scny}[1]{\textcolor{black}{#1}}
\newcommand{\swtd}[1]{\textcolor[HTML]{000000}{#1}}
\newcommand{\sfdtd}[1]{\textcolor[HTML]{000000}{#1}}
\newcommand{\stdtd}[1]{\textcolor[HTML]{000000}{#1}}
\newcommand{\sm}[1]{\textcolor[HTML]{000000}{#1}}
\newcommand{\sd}[1]{\textcolor[HTML]{000000}{#1}}
\newcommand{\craig}[1]{\textcolor{black}{#1}}
\newcommand{\shd}[1]{\textcolor{black}{#1}}
\newcommand{\srs}[1]{\textcolor[HTML]{000000}{#1}}
\newcommand{\srsf}[1]{\textcolor[HTML]{000000}{#1}}
\newcommand{\srsd}[1]{\textcolor[HTML]{000000}{#1}}
\newcommand{\rsm}[1]{\textcolor[HTML]{000000}{#1}}
\newcommand{\scrc}[1]{\textcolor[HTML]{000000}{#1}}
\newcommand{\scrci}[1]{\textcolor[HTML]{000000}{#1}}
\author{Aleksandr Petrov}
\affiliation{%
  \institution{University of Glasgow} \country{United Kingdom}}
\email{a.petrov.1@research.gla.ac.uk}
\author{Craig Macdonald}
\affiliation{%
  \institution{University of Glasgow} \country{United Kingdom}}
\email{craig.macdonald@glasgow.ac.uk}
\title{\ourmodel{}: Reducing Overconfidence in Sequential Recommendation Trained with Negative Sampling}
\begin{document}
\begin{abstract}
\looseness -1 \scrci{A large} catalogue size is one of the central challenges in training recommendation models: a large number of items makes \scrci{them} \srsf{memory and computationally inefficient}  to compute scores for all items during training,  forcing \scrci{these} models to deploy negative sampling. However, negative sampling increases the proportion of positive interactions in the training data, and therefore models trained with negative sampling \dd{tend} to overestimate the probabilities of positive interactions -- a phenomenon we call {\em overconfidence}. While the absolute values of the predicted scores/probabilities are \scrci{not important} for the ranking of retrieved recommendations, overconfident models may fail to estimate nuanced differences in the top-ranked items, resulting in degraded performance. \scrci{In this paper, we show that overconfidence explains why the popular SASRec model underperforms when compared to BERT4Rec. This is contrary to the BERT4Rec authors' explanation that the difference in performance is due to the bi-directional attention mechanism}. To mitigate overconfidence, we propose a novel Generalised Binary Cross-Entropy Loss function (gBCE) and theoretically prove that it can mitigate overconfidence. We further propose the gSASRec model, an improvement over SASRec that deploys an increased number of negatives and \scrci{the} gBCE loss. We show through detailed experiments on three datasets that gSASRec does not exhibit the overconfidence problem. As a result, gSASRec can outperform BERT4Rec (e.g.\ +9.47\% NDCG on \scrci{the} MovieLens-1M \scrci{dataset}), while requiring less training time (e.g.\ -73\% training time on MovieLens-1M). Moreover, in contrast to BERT4Rec, gSASRec is suitable for large datasets that contain more than 1 million items.
\end{abstract}

\maketitle

\section{Introduction}\label{sec:intro}
\looseness -1 \emph{Sequential Recommender Systems} is a class of recommender systems that take into account the order of user-item interactions and aim to predict the next item in a sequence. Taking order into account is important in many \dd{recommendation scenarios} -- for example, if a user just bought a mobile phone, then the next purchase is likely to be an accessory for this phone, and hence it makes sense to recommend such accessories \scrci{to} the user.  Recently, models based on architectures developed for natural language modelling (particularly using Transformers~\cite{Transformer}) have achieved state-of-the-art performance in sequential recommendation~\cite{SASRec, BERT4Rec, DuoRec, Bert4RecRepro, PetrovRSS22, petrov2023rss}. The success of these language model architectures for sequential recommendation is explained by the similarities between \dd{modelling} sequences of words in texts and \dd{modelling} sequences of user-item interactions.
However, a direct adaptation of language model architectures for sequential recommendation \scrci{can} be problematic because the number of items in the \dd{system} catalogue can be much larger than the corresponding vocabulary size of the language models. For example, YouTube has a catalogue of more than 800 million videos\footnote{\href{https://earthweb.com/how-many-videos-are-on-youtube/}{https://earthweb.com/how-many-videos-are-on-youtube/}}. In practice, a direct adaptation of language models with catalogue sizes exceeding 1 million items is computationally prohibitive~\cite{PetrovRSS22, petrov2023rss}. 
\srsd{Indeed, compared with traditional matrix factorisation models that compute one score distribution per user, sequential recommendation models are usually trained to predict scores for each position in the sequence, meaning that the model has to generate $S*N$ scores per sequence, where $S$ is the sequence length, and $N$ is the size of the catalogue.} \srsd{For example, to train a sequential recommendation \scrci{model} with 64 sequences of 200 items per batch, \scrci{having} 1M \rsm{items}, would require ~51GB GPU memory without \scrc{accounting for \scrci{the} training} gradients. Factoring in \scrci{the} gradients and model weights increase this to more than 100Gb, \scrci{thereby exceeding consumer-grade GPU capacities.}} 

\inote{R1 says negative sampling is not a full solution, as it does not speed up inference. The motivation is mostly training-related and mostly about memory issues rather than speed}

A typical solution for this \scrc{training} problem is \emph{negative sampling}: the models are trained on all \emph{positive} interactions (the user-item interactions present in the training set) but only sample a very small fraction of 
\emph{negative} interactions (all other possible \sfdtd{but unseen} user-item interactions). 
\looseness -1 Negative sampling is known to be one of the central challenges~\cite{rendleItemRecommendationImplicit2022} in training recommender systems: it increases the proportion of positive samples in the training data distribution, and therefore models learn to overestimate the probabilities of future user-item interactions. We describe this phenomenon as \emph{overconfidence}. While the magnitude of retrieval scores is typically \scrci{non-important} for the {\em ranking} of items, overconfidence is problematic %
because models frequently fail to focus on nuanced variations in the highly-scored items and focus on distinguishing top vs.\ bottom items instead. \sm{Another problem caused by overconfidence is more specific to models trained with the popular Binary Cross-Entropy loss: if an item $i$ with high predicted probability $p_i$ is sampled as a negative, $\log(1-p_i)$ calculated by the loss function tends to $-$infinity, causing numerical overflows and unstable training. Overall, we argue that over\-co\-nfi\-dence hinders model effectiveness and makes model training hard.}

\srs{Although overconfidence is a general problem applicable to \emph{all} recommender systems trained with negative sampling, in this paper, we focus specifically on \emph{sequential} recommender systems, for which negative sampling is specifically important, due to the large GPU memory requirement discussed above. Indeed, as we show in this paper, the use of negative sampling leads to overconfidence in the popular SASRec~\cite{SASRec} sequential recommendation model. Existing solutions \dd{that can address} \sm{overconfidence induced by} negative sampling in recommender systems (e.g.~\cite{rendleImprovingPairwiseLearning2014, yuanLambdaFMLearningOptimal2016}) are hard to adapt to deep learning-based sequential recommender models (see also Section~\ref{ssec:neg_sampling}). \scrci{Hence,} the overconfidence \scrc{issue present} in negatively-sampled sequential recommendation models remains largely unsolved. \scrc{Indeed}, the state-of-the-art BERT4Rec~\cite{BERT4Rec} \dd{model} does not use negative sampling and, therefore, cannot be applied to datasets with large catalogues.\footnote{\scrc{By BERT4Rec, we refer to the model architecture, the training task and the loss function. As we show in Section~\ref{sssec:results:bert4rec_performance}, \scrci{while it is possible to train BERT4Rec's architecture while using negative sampling, doing so negatively impacts the model's effectiveness.}}}}

\looseness -1 \srsf{Hence, to address the overconfidence issue in the sequential recommendation, we introduce a novel Generalised Binary Cross-Entropy loss (\ourloss{}) – a generalisation of BCE loss using a generalised logistic sigmoid function~\cite{richardsFlexibleGrowthFunction1959, phamCombinationAnalyticSignal2019}. We further propose the Generalised SASRec model (\ourmodel{}) – an enhanced version of SASRec~\cite{SASRec} trained with more negative samples and \ourloss{}. Theoretically, we prove that \ourmodel{} can avoid overconfidence even when trained with negative sampling (see Theorem~\ref{theorem:bias}). Our theoretical analysis aligns with  \scrc{an} empirical evaluation of \ourmodel{} on three datasets (Steam, MovieLens-1M, and Gowalla), demonstrating the benefits of \scrci{having} more negatives and \scrci{the} \ourloss{} loss during training. On smaller datasets (Steam and MovieLens-1M), the combination of these improvements significantly outperforms BERT4Rec's performance on MovieLens-1M (+9.47\% NDCG@10) and achieves comparable results on Steam (-1.46\% NDCG@10, not significant), \scrc{while} requiring much less time to converge. Additionally, \ourloss{} shows benefits when used with BERT4Rec trained with negative samples (+7.2\% \srsd{NDCG@10} compared with BCE Loss on MovieLens-1M with 4 negatives). On the Gowalla dataset, where BERT4Rec training is infeasible due to large catalogue size~\cite{PetrovRSS22, petrov2023rss}, we obtain substantial improvements over the regular SASRec model (+47\% NDCG@10, statistically significant). Although this paper focuses on sequential recommendation, our proposed methods and theory could be applicable to other research areas, such as recommender systems (beyond sequential recommendation), \scrci{search systems}, or natural language processing.}

\looseness -1 \dd{In short, our contributions can be summarised as follows: (i) we define overconfidence through a probabilistic interpretation of sequential recommendation; (ii)  we show (theoretically and empirically) that SASRec is prone to overconfidence due to its negative sampling; (iii) we propose \ourloss{}~loss and theoretically prove that it can mitigate the overconfidence problem; (iv) we use \ourloss{}~to train \ourmodel{}~and show that it exhibits better (on MovieLens-1M) or similar (on Steam) effectiveness to BERT4Rec, while \scrci{both} requiring less training time, and also being suitable for training on large datasets.}

The rest of this paper is as follows: Section~\ref{sec:related_work} provides an overview of related work; Section~\ref{sec:background} formalises sequential recommendation and \scrci{the} typically used loss functions; we describe the problem of overconfidence in Section~\ref{sssec:overconfidence}; in Section~\ref{sec:gbce_and_properties} we introduce \ourloss{}~and theoretically analyse its properties, before defining \ourmodel{}; Section~\ref{sec:experiments} experimentally analyses the \scrci{impact} of negative sampling in \sfdtd{SASRec, BERT4Rec and~\ourmodel{}}; Section~\ref{sec:conclusion} \scrci{provides} concluding remarks. 

\section{Related Work}\label{sec:related_work}
In this section we discuss existing work related to negative sampling in recommender systems.  We review existing approaches for traditional (Matrix Factorisation-based) recommender systems in Section~\ref{ssec:neg_sampling} and discuss why they are hard to apply for sequential recommendation. \dd{We} then discuss training objectives \inote{R1 doesn't like the term "objective"} and positive sampling strategies in Section~\ref{ssec:related_positive_sampling} and show that this is an orthogonal research direction to negative sampling. \srsf{Section~\ref{ssec:contrastive_learning} positions our work viz.\ the orthogonal direction of contrastive learning.}  Finally, in Section~\ref{ssec:large_vocabulary_bottleneck}, we discuss how similar problems are solved in language models and why these solutions are not applicable to recommendations. 

\subsection{\srs{Negative Sampling Heuristics: Hard Negatives, Informative Samples, Popularity Sampling}}\label{ssec:neg_sampling}
\looseness -1 One of the first attempts to train recommender systems with negative sampling was Bayesian Personalised Rank (BPR)~\cite{rendleBPRBayesianPersonalized2009}. \sm{The authors of BPR observed that models tend to predict scores close to exactly one for positive items in the training data (\craig{a form of} overconfidence) and proposed to sample one negative item for each positive item and optimise the relative order of these items, instead of the absolute probability of each item to be positive.} However, as Rendle (the first author of BPR) has recently shown~\cite{rendleItemRecommendationImplicit2022}, BPR optimises the Area Under Curve (AUC) metric, which is not \sfdtd{top-heavy} and is therefore not \swtd{most effective} for a ranking task. Hence, several improvements over BPR, such as  WARP~\cite{westonWsabieScalingLarge2011}, \dd{LambdaRank~\cite{burgesRanknetLambdarankLambdamart2010}}, LambdaFM~\cite{yuanLambdaFMLearningOptimal2016}, and adaptive item  
sampling~\cite{rendleImprovingPairwiseLearning2014} have since been proposed to make negatively-sampled recommender models more suitable for \swtd{top-heavy} ranking tasks. \srs{These approaches usually try to mine the most informative (or \emph{hard}) negative samples that erroneously have high scores and therefore are ranked high.}
Unfortunately, these approaches mostly rely on iterative \dd{or sorting-based} sampling techniques that are not well-suited for neural network-based approaches
used by sequential recommendation models: neural models are usually trained on GPUs, which allow efficient parallelised computing, but
perform poorly with such iterative methods. Indeed, Chen et al.~\cite{chenGeneratingNegativeSamples} recently proposed an iterative sampling procedure for sequential recommendation, but only experimented with smaller datasets (<30k items) where \scrc{state-of-the-art} results can be achieved without sampling at all (see also Section~\ref{sssec:results:bert4rec_performance}). 
Instead, sequential recommenders typically rely on simple heuristics such as uniform random sampling (used by Caser~\cite{Caser} and SASRec~\cite{SASRec}) or do not use negative sampling at all (e.g.\ BERT4Rec~\cite{BERT4Rec}). \sfdtd{Pellegrini et al.~\cite{pellegriniDonRecommendObvious2022a}  recently proposed to sample negatives according to their popularity and showed \craig{this to be} beneficial when the evaluation metrics are also popularity-sampled. Our initial experiments have shown that popularity-based sampling is indeed beneficial with popularity-based \dd{evaluation} metrics, but not with the full (unsampled) metrics. However, several recent publications~\cite{kricheneSampledMetricsItem2022, canamaresTargetItemSampling2020, dallmannCaseStudySampling2021, PetrovRSS22, petrov2023rss} recommend against using sampled metrics, and therefore we avoid popularity sampling in this paper.}

\looseness -1 \sfdtd{Another heuristic that is \sdd{popular  for search tasks  is {\em in-batch} sampling~\cite[Ch.~5]{linPretrainedTransformersText2022} (e.g.\ used by GRU4Rec~\cite{GRU4Rec}). According to \cite{rendleItemRecommendationImplicit2022}, }in-batch sampling is equivalent to popularity-based negative sampling, and \sdd{hence} we avoid it for the same reason stated above.} Indeed, we focus on uniform sampling -- as used by many sequential recommender systems -- and design a solution that helps to counter the overconfidence of such models caused by uniform sampling.

\subsection{Training Objectives}\label{ssec:related_positive_sampling}
\scrc{A} \emph{training objective} is the task that the model learns to solve during the course of training. Some of the most popular alternative training objectives for sequential recommendation models include: \emph{sequence continuation}, where the model learns to predict one or several next items in the sequence (used by Caser~\cite{Caser}); \emph{sequence shifting}, where the model learns to shift the input sequence by one element to the left (used by SASRec~\cite{SASRec} and NextItNet~\cite{yuanSimpleConvolutionalGenerative2019}); item masking (used by BERT4Rec~\cite{BERT4Rec}); recency-based sampling, where the target items are selected probabilistically with a higher chance of selecting recent items (used by SASRec-RSS~\cite{PetrovRSS22, petrov2023rss}). Each of these \sfdtd{training objectives requires} negative interactions \craig{in order to} train the model to distinguish them from the positive ones. \sfdtd{\sm{Therefore,} the negative sampling strategy can be seen as orthogonal to the training objective}. 
Hence, in this paper, we only focus on negative sampling, using \scrc{the} classic SASRec \scrc{model}~\cite{SASRec}\scrc{,} with its sequence shifting training task\scrc{,} as our \scrc{"}backbone model\scrc{"}.

\subsection{\srs{Contrastive Learning}} 
\label{ssec:contrastive_learning}
\srs{
    \inote{R1: not clear why we need this section}
    \scrc{In this section, we briefly discuss contrastive learning methods, which have recently \scrci{been shown to be effective} in sequential recommendation~\cite{DuoRec, xieContrastiveLearningSequential2022, CBiT, zhouS3RecSelfSupervisedLearning2020}; the main goal of this discussion is to highlight the orthogonality of these methods to our research. Contrastive learning methods} augment the main training objective with  \scrc{an} auxiliary contrastive objective to help the}
    model to learn more generic sequence representations. The idea is to generate several versions of the same sequence (e.g. crop, reverse, add noise etc.) and add an auxiliary loss function that ensures that two versions of the same sequence have similar latent representations while representations of different sequences are located far away from each other in the latent space. This allows the model to learn more robust representations of sequences and generalise better to new sequences. However, these contrastive models still require regular training objectives and  loss functions and, therefore, also require negative sampling when the catalogue size is large. Hence, contrastive learning is an orthogonal direction, and auxiliary contrastive loss can be used with the methods described in this paper. However, in Section~\ref{ssec:sota_performance}, we demonstrate that \ourmodel{} can achieve quality comparable with the best contrastive methods even without auxiliary training objectives and loss functions.

\subsection{Large Vocabularies in Language Models}\label{ssec:large_vocabulary_bottleneck}
\looseness -1 In Natural Language Processing, the problem aligned to a large catalogue size is known as the \emph{large vocabulary bottleneck}. Indeed, according to \sm{Heap's Law~\cite{heapsInformationRetrievalComputational1978},} the number of different words in a text corpus grows with the size of the corpus, reaching hundreds of \sd{billions} of words in recent corpora~\cite{dodgeDocumentingLargeWebtext2021a}, and making computing scores over all possible words in a corpus problematic.   
A typical solution employed by modern deep learning language models is to use \emph{Word Pieces}~\cite{wuGoogleNeuralMachine2016}, which splits infrequent words \dd{into (more frequent)} sub-word groups of characters.  
This allows to use a vocabulary of relatively small size (e.g.\ $\sim$30,000 \dd{tokens} in BERT\sd{~\cite{BERT}}) whilst being capable of modelling millions of words by the contextualisation of the embedded word piece representations. \scrc{While decomposing item ids into \scrci{sub-items} can be used to reduce the item vocabulary of a recommender~\cite{petrov2023generative}, the decomposition requires a more complex two-stage learning process to assign \scrci{sub-items}}.
Other techniques have also been proposed to reduce the vocabulary size by pruning some tokens. For example, some classification models remove non-discriminating words~\cite{stolckeEntropybasedPruningBackoff1998, acquavia2023static}, which in the context of recommender systems means removing popular items (e.g.\ if a movie was watched by most of the users, it is not-discriminating). \sfdtd{However, removing popular items is a bad idea as users are prone to  \sm{interact} with popular items and recommending popular items is a strong baseline}~\cite{jiRevisitPopularityBaseline2020}. %
Perhaps the most related work to ours is the Sampled Softmax loss~\cite{jeanUsingVeryLarge2015}, which proposes a mechanism to approximate the value of a Softmax function using a small number of negatives. However, Softmax loss is known to be prone to overconfidence~\cite{weiMitigatingNeuralNetwork2022a}. Indeed, Sampled Softmax loss has recently been  shown to incorrectly estimate the magnitudes of the scores in the case of recommender systems~\cite{wuEffectivenessSampledSoftmax2022}. \srs{Our experiments with Sampled Softmax loss are aligned with these findings. We discuss Sampled Softmax loss in detail in Section~\ref{ssec:recsys:loss} and experimentally evaluate it in Section~\ref{ssec:gbce_effect}}.
In summary, among the related work, there is no solution to the overconfidence problem %
in sequential recommender systems. Hence, we aim to close this gap and design a solution for this overconfidence that is suitable for sequential models. In the next section, we cover the necessary required preliminaries and then in Section~\ref{sec:gbce_and_properties}, we show that the problem can be solved with the help of Generalised Binary Cross-Entropy loss.

\section{Sequential Recommendation \& loss Functions}\label{sec:background}
In the following, Section~\ref{ssec:backgournd:transformers} describes the SASRec and BERT4Rec sequential recommendation models, which form the backbone of this paper.
In Section~\ref{ssec:background:sequential_recsys}, we more formally set the sequential recommendation task as a probabilistic problem and in Section~\ref{ssec:recsys:loss} discuss loss functions used for training sequential models.  
\subsection{SASRec and BERT4Rec}\label{ssec:backgournd:transformers}
\swtd{Transformer~\cite{Transformer}-based models have recently outperformed other models in Sequential Recommendation~\cite{SASRec, BERT4Rec,DuoRec,Bert4RecRepro, PetrovRSS22, petrov2023rss}}.
Two of the most popular Transformer-based recommender models are BERT4rec~\cite{BERT4Rec} and  SASRec~\cite{SASRec}. 
 \sm{The key differences between the models} include different attention mechanism (bi-directional vs.\ \sdd{uni-directional}), different training objective (Item Masking vs.\ Shifted Sequence), different loss functions (Softmax loss vs.\ BCE loss), and, importantly, \sm{different negative sampling strategies} (BERT4Rec does not use sampling, whereas SASRec samples 1 negative per positive).  

BERT4Rec was published one year later compared to SASRec, and in the original publication~\cite{BERT4Rec}, 
Sun et al.\ demonstrated \dd{the} superiority of BERT4Rec over SASRec. Petrov and Macdonald confirmed this superiority
in a recent \sd{replicability} study~\cite{Bert4RecRepro},
\craig{and observed that, when fully-converged, BERT4Rec still exhibits state-of-the-art performance, 
outperforming many later models.}

\looseness -1 Sun et al.~\cite{BERT4Rec} attributed BERT4Rec's high effectiveness to its bi-directional attention mechanism.
\sm{Contrary to that}, our theoretical analysis and experiments show that it should be attributed to \scny{the model overconfidence caused by the negative sampling} used by SASRec (see Section~\ref{sssec:results:bert4rec_performance}). Indeed, when controlled \sm{for negative sampling}, these models perform similarly (e.g. SASRec also exhibits state-of-the-art performance when trained without negative sampling)\inote{R2 wants us to mention that this was also shown in the RSS paper (not exactly) and some random workshop paper (not sure if worth mentioning)}. 
\srs{Unfortunately, as we argue in Section~\ref{sec:intro}, the large size of the item catalogue in many real-world systems means that} \scrc{using} negative sampling in \scrc{the training of} such systems is unavoidable, \srsd{and therefore these systems can not use models that do not use sampling, such as BERT4Rec.} Our goal hence is to improve SASRec's performance \dd{(by addressing overconfidence) while retaining the negative sampling, which is needed for large-scale systems.} 

We now discuss a probabilistic view of sequential recommendation, which we use for improving SASRec in Section~\ref{sec:gbce_and_properties}.

\subsection{Probabilistic View of Sequential Recommendation}\label{ssec:background:sequential_recsys}
\looseness -1 The goal of a sequential recommender system is to predict \sm{the next item} in a sequence of user-item interactions. 
Formally, given a sequence of user-item interactions $u=\{i_0, i_1, i_2, ... i_n\}$, where $i_k \in I$, the goal \swtd{of the model is to predict the next user's interaction $i_{n+1}$}. 
Sequential recommendation is usually cast as a \emph{ranking problem}, so \emph{predict} means to rank items in the catalogue according to their estimated probability of appearing next in the sequence. \craig{We denote this (\emph{prior}) probability distribution over all items appearing next in the sequence after $u$ as $P(i|u)$. $P(i|u)$ is not directly observable: \sm{the training data only contains the user's actual interactions and does not contain information about the probabilities of any alternative items not interacted with.} We refer to the prior as $P(i)$ for simplicity.}

\craig{Learning to estimate the prior distribution $P(i)$ is a hard task because the model doesn't have access to it, even during training. Instead, the model learns to estimate these probabilities, i.e.\ $\hat{p}$ = $\{\hat{p}_1, \hat{p}_2, ...,  \hat{p}_{|I|}\}$, by
using a posterior distribution $y(i) = \I[i=i^+]$,} where $i^+ \in I$ is a positive interaction selected according to the training objective (as discussed in Section~\ref{ssec:related_positive_sampling}). $y(i)$ is measured \emph{after} the user selected the item, so it always equals 1 for the positive item $i^+$ and equals 0 for all other items.

Note that to rank items, models do not have to compute \dd{the} modelled probabilities $\hat{p}$ explicitly. 
Instead, models frequently compute item \emph{scores} $s = \{s_1, s_2, ...,  s_{|I|}\}$ and assume that if item $i$ is scored higher than item $j$ ($s_i > s_j$) then 
item $i$ is more likely to appear next in the sequence than item $j$ ($\hat{p_i} > \hat{p_j}$). Whether or not it is possible to recover modelled item probabilities $\hat{p} = \{\hat{p}_1, \hat{p}_2, .., \hat{p}_{|I|}\}$ from the scores $s$ depends on the loss function used for model training. 

We say that a loss function $\mathcal{L}$ \emph{directly models probabilities $\hat{p}$}, if there exists a function $f$, \sd{which converts scores to probabilities} ($\hat{p_i} = f(s_i)$) and when the model is trained with $\mathcal{L}$, $\hat{p}$
approximates the prior distribution $P$ (e.g. a model trained with $\mathcal{L}$ minimises the KL divergence between $P$ and $\hat{p}$). 
\sfdtd{In the next section, we discuss the loss functions used by sequential models that directly model probabilities.}

\subsection{BCE \sdd{Loss} and Softmax \sdd{Loss}}\label{ssec:recsys:loss}
Two popular loss functions, which directly model probabilities are \emph{Binary Cross-Entropy (BCE)} (used by Caser~\cite{Caser} and SASRec~\cite{SASRec})
and {\em Softmax \sdd{loss}} (used by BERT4Rec~\cite{BERT4Rec} and ALBERT4Rec~\cite{Bert4RecRepro}).

\looseness -1 Binary Cross-Entropy is a \emph{pointwise} loss, which treats the ranking problem as a set of independent binary classification problems. It models the probability \sd{with the help of the {\em logistic sigmoid function $\sigma(s)$}:}
\begin{align}
    \hat{p_i} = \sigma(s_i) =  \frac{1}{1+e^{-s_i}} \label{eq:porb_is_sigmoid}
\end{align}
The value of BCE loss is then computed as: 
\begin{align}
    \LossBCE = -\frac{1}{|I|} \sum_{i \in I} y(i)\log(\hat{p_i}) + (1-y(i))\log(1-\hat{p_i})
\end{align}

\looseness -1 BCE minimises \scrc{the} KL divergence~\cite[Ch.~5]{murphyProbabilisticMachineLearning2022} between \scrc{the} posterior and \scrc{the} modelled \scrc{distributions,} $D_{KL}(y(i)||p_i)$, where each \scrc{of the probability distributions} is treated as a distribution with two outcomes \sfdtd{(i.e.\ interaction/no interaction)}. BCE \dd{considers} each probability independently, \srsd{so} their sum does not have to add up to 1.
Indeed, as we show in Section~\ref{sec:gbce_and_properties}, when BCE is used with negative sampling, the model learns to predict probabilities close to 1 for the most highly-ranked items. 

\looseness -1 In contrast, Softmax loss treats the ranking problem as a multi-class classification problem, \dd{thereby considering}
the probability distribution across all items, obtained by using \dd{a} $\text{softmax}(\cdot)$ operation:
\begin{align}
    \hat{p_i} = \text{softmax}(s_i) =  \frac{e^{s_i}}{\sum_{j \in I}e^{s_j}}\label{eq:softmax}
\end{align}
The value of Softmax loss is then computed as: 
\begin{align}
    \mathcal{L}_{softmax} = -\sum_{i \in I} y(i)\log(\hat{p}_i) = -\log(\text{softmax}(s_{i^+}))
\end{align}
Softmax loss minimises KL divergence~\cite[Ch.~5]{murphyProbabilisticMachineLearning2022} between posterior and modelled distributions $D_{KL}(y||p)$, where each $y$ and $p$ are multi-class probability distributions.  In contrast to BCE, \sm{the item probabilities} $\hat{p}_{i}$ modelled by Softmax loss add up to 1, \sm{meaning that overconfidence is less prevalent} (however, it is still known to overestimate probabilities \sdd{of} the top-ranked items~\cite{weiMitigatingNeuralNetwork2022a}). 

Unfortunately, the $\text{softmax}(\cdot)$ operation used by Softmax loss \sfdtd{requires access to \emph{all} item scores to compute the probabilities \sfdtd{(which makes it more of a \emph{listwise} loss)}, whereas if the model is trained with negative sampling, the scores are only computed for the \emph{sampled} items. This makes Softmax loss unsuitable for training with negative sampling. In particular, this means that BERT4Rec, which uses the Softmax loss, cannot be trained with sampled negatives (without changing the loss function).} 

\looseness -1 \srs{To use Softmax loss with sampled negatives, Jean et\ al.~\cite{jeanUsingVeryLarge2015} proposed \emph{Sampled Softmax Loss (SSM)}. SSM approximates probability  $\hat{p_i}$ from Equation~\eqref{eq:softmax} using a subset of $k$ negatives $I^-_k \subset I^-$. This approximation is then used to derive the loss:}
\srs{
    \begin{align}
            \hat{p_i} = \text{SSM}(s_i, I^-_k) =  \frac{e^{s_i}}{e^{s_{i^+}} + \sum_{j \in I^-_k}e^{s_j}}\label{eq:ssm}\\
            \mathcal{L}_{SSM} = -\sum_{i \in \{I_k^- \cup {i^+}\}} y(i)\log(\hat{p}_i) = -\log(\text{SSM}(s_{i^+}))
    \end{align}
}
\srs{
    The estimated probability value computed with Sampled Softmax is higher than the probability estimated using full Softmax, as the denominator in Equation~\eqref{eq:softmax} is \srsf{larger} than the denominator in Equation~\eqref{eq:ssm}. However, if all high-scored items are included in the sample $I^-_k$, the approximation becomes close. To achieve this, Jean et\ al. originally proposed a heuristic approach specific to textual data (they segmented texts into chunks of related text, where each chunk had only a limited vocabulary size). In the context of sequential recommender systems, some prior works~\cite{yuanSimpleConvolutionalGenerative2019, pellegriniDonRecommendObvious2022a}  used variations of SSM loss with more straightforward sampling strategies, such as popularity-based or uniform sampling. 
    In this paper, we focus on the simplest scenario of uniform sampling, and therefore in our experiments, we use Sampled Softmax loss with uniform sampling.    
    Note that Sampled Softmax Loss normalises probabilities differently compared to the full Softmax loss, and therefore the Sampled Softmax loss and the full Softmax loss are different loss functions. Indeed, as Sampled Softmax uses only a sample of items in the denominator of Equation~\eqref{eq:ssm}, the estimated probability of the positive item $\hat{p_i}$ is an overestimation of the actual probability, a form of overconfidence. \srsd{Indeed, as mentioned above,  Sampled Softmax loss fails to estimate probabilities accurately for recommender systems~\cite{wuEffectivenessSampledSoftmax2022}}. Nevertheless, as variations of Sampled Softmax have been used in sequential recommendations~\cite{yuanSimpleConvolutionalGenerative2019, pellegriniDonRecommendObvious2022a}, we use Sampled Softmax loss as a baseline in our experiments (see Section~\ref{ssec:gbce_effect}).
}

\sm{In contrast, it is possible to calculate BCE loss over a set of sampled negatives $I^-_k$ \srs{without modifying the loss itself (except for a normalisation constant, which does not depend on the item score and therefore can be omitted)}, as follows:} 
    \begin{align}
        \LossBCE =-\frac{1}{|I_k^-| + 1} \left( \log(\sigma(s_{i^+})) + \sum_{i \in I_k^{-}}\log(1-\sigma(s_i)) \right) \label{eq:bce_sampled}
    \end{align}
\sfdtd{Using BCE with sampled negatives is a popular approach, applied by  models such as SASRec~\cite{SASRec} (which uses 1 negative per positive), \dd{and} Caser~\cite{Caser} (which uses 3 negatives). 
Unfortunately, negative sampling used with Binary Cross-Entropy leads to model overconfidence, which we discuss in the next section.}     

\section{Model Overconfidence}\label{sssec:overconfidence}
\begin{figure}
    \centering
    \includegraphics[width=\linewidth]{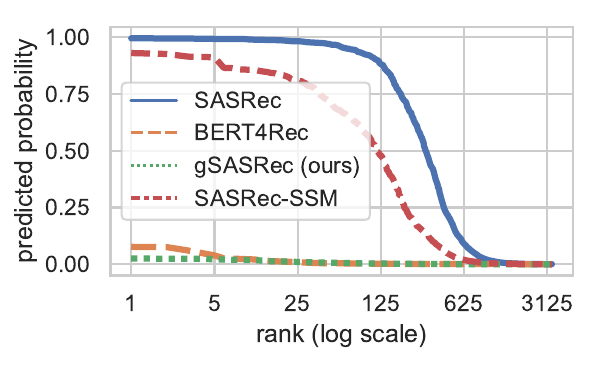}
    \caption{Predicted probability at different ranks for user 963 in MovieLens-1M. \srsf{SASRec-SSM is a SASRec model trained with Sampled Softmax loss with 16 negatives.}}\label{fig:rank_prob_label}
\end{figure}

\looseness -1 We say that a model is {\em overconfident} in its predictions if its predicted probabilities $\hat{p}_i$ for \sfdtd{highly-scored items} are much larger compared to prior probabilities $P(i)$, i.e., $ \hat{p}_i \gg P(i)$.
\craig{In general, the magnitude of the relevance estimates are rank-invariant, i.e.\ do not affect the ordering of items, and hence they are rarely considered important when formulating a ranking model. \dd{In contrast, overconfidence is problematic} only for the loss functions used to train the models, particularly when they directly model the interaction probability. Indeed,} for some loss functions (such as pairwise BPR~\cite{rendleBPRBayesianPersonalized2009} or listwise LambdaRank~\cite{burgesRanknetLambdarankLambdamart2010}), only the difference between the scores of paired items ($s_i - s_j$) is important, and therefore we cannot define overconfidence for these losses. However, these losses usually require algorithms that iteratively select ``informative" negative samples, which are hard to apply with deep learning methods (see also Section~\ref{ssec:neg_sampling}).
As discussed in Section~\ref{ssec:background:sequential_recsys}, the prior probability distribution $P(i)$ cannot be directly observed, and therefore overconfidence may be hard to detect. However, in some cases, overconfidence may be obvious.  For example, Figure~\ref{fig:rank_prob_label} shows predicted probabilities by \srsf{four} different models for a sample user in the MovieLens-1M dataset. As can be seen from the figure, SASRec's predicted probabilities for items at positions 1..25 are almost indistinguishable from 1. This is a clear sign of overconfidence: only one of these items can be the correct prediction, and therefore we expect the \emph{sum} of probabilities to be approximately equal to 1, and not each individual probability. In fact, in this  figure, the sum of all probabilities predicted by SASRec equals 338.03. In contrast, for BERT4Rec, the sum of probabilities equals exactly 1 (as \dd{the probabilities are} computed using Softmax) and for our \ourmodel{}~(see Section~\ref{ssec:gsasrec})~it is equal to 1.06. \srsd{From the figure, we also see that a SASRec model trained with Sampled Softmax loss is also prone to overconfidence (sum of all probabilities equals 152.3).}

\looseness -1 Overconfidence \dd{for highly-ranked} items is problematic: the model does not learn to distinguish these items from each other (all their predicted probabilities are approximately equal) and focuses on distinguishing top items from the bottom ones. \scrc{The lack of focus on the top items} contradicts our goal: we want \dd{the correct order of highly-ranked items and are not interested in} score differences beyond a certain cutoff. Moreover, overconfidence is specifically problematic \scrc{for BCE loss}: if an item with high probability  $\hat{p}_i \approx 1$ is selected as a negative (the chances of such an event are high when there are many high-scored items), $\log(1-{p}_i)$ computed by the loss \scrc{function} tends to $-\infty$, causing numerical overflow problems and \scrci{a} training instability. 

\looseness -1 \shd{Next, we introduce gBCE loss, apply it for a theoretical analysis of BCE's overconfidence, and show how gBCE mitigates the overconfidence problem.}

\section{Generalised Binary Cross Entropy and its properties}\label{sec:gbce_and_properties}  
In this section we design \ourloss{}~and theoretically show that it can mitigate the overconfidence problem. 
In Section~\ref{ssec:gbce} we introduce \ourloss{}~and analyse its properties; in Section~\ref{ssec:gbce_bce_relation} we show that \ourloss{}~may be replaced with regular BCE loss with transformed positive scores, which may be more convenient in practice; in Section~\ref{ssec:t} we show how to reparametrise~\ourloss{}~to make it independent from the chosen sampling rate; finally, in Section~\ref{ssec:gsasrec} we introduce \ourmodel{} -- an improved version of SASRec, which uses \ourloss{}.

\subsection{Generalised Binary Cross Entropy}\label{ssec:gbce}
We now introduce \emph{Generalised Binary Cross Entropy (\ourloss{})} loss,
which we use to analyse and mitigate overconfidence induced by negative sampling. We define \ourloss{}, parameterised \craig{by $\beta$} as:
    \begin{align}
        \LossGBCE =-\frac{1}{|I_k^-| + 1} \left( \log(\sigma^{\beta}(s_{i^+})) + \sum_{i \in I_k^{-}}\log(1-\sigma(s_i)) \right). \label{eq:gbce}
    \end{align}
\ourloss{}~differs from regular BCE loss in that it uses the \stdtd{\emph{generalised logistic sigmoid function}~\cite{richardsFlexibleGrowthFunction1959, phamCombinationAnalyticSignal2019} for the positive sample} (sigmoid raised to the power of $\beta$). \stdtd{The power parameter \dd{$\beta\geq0$ controls the shape of the generalised sigmoid. For example, when $\beta \approx 0$, the output of the generalised sigmoid becomes closer to 1 for all input scores.} On the other hand,}
when $\beta=1$, \stdtd{BCE and \ourloss{}} are equal:
\begin{align}
    \LossGBCEb{1}=\LossBCE \label{eq:gbce_equal_bce}
\end{align}

Similarly to BCE, \ourloss{}~is also a pointwise loss, and \sdd{it considers} the probability of interaction as \sm{a sigmoid transformation} of the model score (Equation~\eqref{eq:porb_is_sigmoid}). We now show the exact form of the relation between the prior probability $P(i)$ (which we desire to estimate, as we discuss in Section~\ref{ssec:background:sequential_recsys}) and the modelled probabilities $\hat{p}_i = \sigma(s_i)$, learned by a model trained with~\ourloss{}. 
 
\begin{theorem}
    \sdd{For every user in the dataset}, let $P(i)$ be the prior probability distribution of \sdd{the} user interacting with item $i \in I$, 
    \sfdtd{$s = \{s_1, ... s_{|I|}\}$ are scores} predicted by the model, $i^{+}$ is a positive sample selected by the user,
    $I_k^{-} = \{i^{-}_1, i^{-}_2, ...,  i^{-}_k\}$ - k randomly (uniformly, with replacement) sampled 
    negatives, $\alpha = \frac{k}{|I^-|}$ - negative sampling rate. 
    Then a recommender model\scrc{,} trained on a sufficiently large number of training 
    samples using gradient descent and $\LossGBCE$ loss\scrc{,} will converge to predict score distribution $s$, so that \label{theorem:bias}
    \begin{align}
        \sigma(s_i) = \frac{\beta P(i)}{\alpha - \alpha P(i) + \beta P(i)}; 
        \forall i \in I \label{eq:theorem}
    \end{align}
\end{theorem}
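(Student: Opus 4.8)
The plan is to characterise the minimiser of the \emph{expected} (population) loss and to argue that gradient descent on sufficiently many samples converges to it. First I would fix an arbitrary item $i \in I$ and collect every term of $\LossGBCE$ in which the score $s_i$ appears. Item $i$ contributes $\beta\log\sigma(s_i)$ exactly when it is the positive $i^+$, which happens with probability $P(i)$, and it contributes $\log(1-\sigma(s_i))$ once for each time it is drawn into the negative multiset $I_k^-$. Because negatives are sampled uniformly with replacement from $I^- = I \setminus \{i^+\}$, the expected multiplicity of $i$ among the $k$ draws, conditioned on $i$ not being the positive, is $k/|I^-| = \alpha$; combining this with the probability $1-P(i)$ that $i$ is available as a negative at all, the expected negative weight of $i$ is $(1-P(i))\alpha$.

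Taking expectations over the positive draw $i^+ \sim P$ and over the negative sampling, the population loss then separates into a sum of independent per-item contributions,
\begin{align}
\E[\LossGBCE] = -\frac{1}{k+1}\sum_{i \in I}\Bigl(\beta P(i)\log\sigma(s_i) + (1-P(i))\alpha\log(1-\sigma(s_i))\Bigr).
\end{align}
The second step relies on the assumption implicit in the theorem statement — a model trained on a \emph{sufficiently large} number of samples with enough capacity — to treat each score $s_i$ as a free parameter, so that minimising the whole sum reduces to minimising each summand separately.

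Third, writing $p_i = \sigma(s_i) \in (0,1)$ and using that $\sigma$ is a monotone bijection (so optimising over $s_i$ is equivalent to optimising over $p_i$), each summand becomes $-\bigl(\beta P(i)\log p_i + (1-P(i))\alpha\log(1-p_i)\bigr)$, a strictly convex function of $p_i$. I would set its derivative to zero,
\begin{align}
\frac{\beta P(i)}{p_i} = \frac{(1-P(i))\alpha}{1-p_i},
\end{align}
and solve the resulting linear equation for $p_i$, obtaining $p_i = \beta P(i)/(\alpha - \alpha P(i) + \beta P(i))$, which is exactly Equation~\eqref{eq:theorem}; strict convexity then confirms that this stationary point is the unique global minimum.

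The most delicate step is the bookkeeping that produces the negative weight $(1-P(i))\alpha$: one must correctly account for sampling \emph{with replacement} (so that $I_k^-$ is a multiset and repeated draws of $i$ are summed), for the conditioning on $i \neq i^+$, and for the identification $\alpha = k/|I^-|$. The remaining conceptual gap is the standard one of justifying that empirical gradient descent reaches the population minimiser and that the model has sufficient capacity to realise the per-item optimal scores independently; I would state these as the operating assumptions encapsulated by the phrase ``a sufficiently large number of training samples'', rather than attempt a finite-sample convergence argument.
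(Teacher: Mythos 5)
Your proposal is correct and follows essentially the same route as the paper's proof: decompose the expected loss into per-item contributions, compute the expected positive weight $P(i)$ and expected negative weight $\alpha(1-P(i))$, and set the derivative with respect to $\sigma(s_i)$ to zero. Your additional observations (strict convexity of each summand in $p_i$ guaranteeing a unique global minimum, and the explicit capacity assumption needed to treat each $s_i$ as a free parameter) are refinements the paper leaves implicit, not a different argument.
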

\begin{proof}
    With a sufficiently large number of training samples, gradient descent converges to minimise the expectation of the loss \sm{function~\cite[Ch.~4]{goodfellowDeepLearning2016} \craig{(assuming the expectation has no local minima)}}. 
    Therefore, the predicted score distribution converges to the minimum of \sfdtd{the expectation $\E\left[\LossGBCE\right]$}:
    \begin{align}
        s = \argmin_{s}\E\left[\LossGBCE\right]
    \end{align}
    Hence, our goal is to show that Theorem~\ref{theorem:bias} is true if and only if the expectation $\E\left[\LossGBCE\right]$ is minimised. 
    
    To show that, we first rewrite the definition of $\LossGBCE$~(Equation~\eqref{eq:gbce}) as a sum of contributions \craig{for each
    individual item in $I$}: 
    \begin{align}
        \LossGBCE = \frac{1}{|I^-_k|+1}\sum_{i \in I} {\Loss}_i
    \end{align}
    where the contribution of each item, $\mathcal{L}_i$, is defined as follows: 
    \begin{align}
        {\Loss}_i = -(\I[i = i^+] \log(\sigma^\beta(s_{i})) + \sum_{j=1}^{k} \I[i=i^-_j]\log(1-\sigma{(s_{i})}))
   \end{align}
   The probability of an item being selected as a positive is defined by the prior distribution:
   \begin{align}
   P(\I[i=i^+]) = P(i)\label{eq:positive_prob}
   \end{align}
   whereas the probability of an item being selected as $j^{th}$ negative is equal to the product of the probability of an item being negative and the negative sampling probability. If we apply a uniform sampling with a replacement for identifying negatives, then the sampling probability is always equal to $\frac{1}{|I^-|}$, so  overall, the probability of selecting an item $i$ as the $j^{th}$ negative can be written as:
   \begin{align}P(\I[i=i^-_j]) = \frac{1}{|I^-|}(1-P(i))\label{eq:negative_prob}\end{align}
   We can now calculate the expectations of each individual loss contribution $\E[{\Loss}_i]$:  
    \begin{align}
     \E[{\Loss}_i] &= -(P(\I[i = i^+]) \log(\sigma^\beta(s_{i})) + \sum_{j=1}^{k} P(\I[i=i^-_j])\log(1-\sigma{(s_{i})}))  \nonumber\\& \text{\;\;\;\;\;\;\;\;\;(By the definition of expectation)} \nonumber\\
        &= -(P(i) \log(\sigma^\beta(s_{i})) + \sum_{j=1}^{k} \frac{1}{|I^-|}(1-P(i))\log(1-\sigma{(s_{i})})) \nonumber\\& \text{\;\;\;\;\;\;\;\;\;(Substituting Equations~\eqref{eq:positive_prob}~and~\eqref{eq:negative_prob})} \nonumber\\
        &= -(P(i) \log(\sigma^\beta(s_{i})) +\frac{k}{|I^-|}(1-P(i))\log(1-\sigma{(s_{i})}))  \nonumber\\&  \text{\;\;\;\;\;\;\;\;\;(The sum is just the same term repeated $k$ times)} \nonumber\\
        &= -(P(i) \log(\sigma^\beta(s_{i})) +\alpha(1-P(i))\log(1-\sigma{(s_{i})})) \nonumber\\& \text{\;\;\;\;\;\;\;\;\;(Substituting the sampling rate definition $\alpha = \frac{k}{|I^-|}$)}
        \label{eq:loss_components_expectation}
    \end{align}
      Differentiating Equation~\eqref{eq:loss_components_expectation} on $\sigma(s_i)$ we get: 
    \begin{align}
        \frac{d \E[{\Loss}_i]}{d \sigma(s_i)} = -\frac{\beta P(i)}{\sigma(s_i)} + \frac{\alpha(1-P(i))}{1-\sigma(s_i)} \label{eq:loss_derivative}
    \end{align}
    \sfdtd{Our goal is to minimise the expectation $ \E[{\Loss}_i]$, so equating this derivative to zero and solving for 
    $\sigma(s_{i})$ we obtain the value of $\sigma(s_{i})$, which minimises the expectation}:
    \begin{align}
        \sigma(s_i) = \frac{\beta P(i)}{\alpha - \alpha P(i) + \beta P(i)}
        \label{eq:desired_equating}
    \end{align} 
    
   We now rewrite the expectation $\E\left[\LossGBCE\right]$  as the sum of its individual components: 
   \begin{align}
       \E\left[\LossGBCE\right] = \E\left[\frac{1}{|I^-_k| + 1}\sum_{i \in I}{\Loss}_i\right] = \frac{1}{|I^-_k|+1}\sum_{i \in I}\E\left[{\Loss}_i\right]
    \label{eq:exp_sum_contributions}
   \end{align}

  According to Equation~\eqref{eq:exp_sum_contributions}, the expectation $\E\left[\LossGBCE\right]$ is minimised when, for each $i \in I$,
   the individual contributions $\E[{\Loss}_i]$ are minimised, i.e.\ when Equation~\eqref{eq:desired_equating} is true for each $i \in I$. 
\end{proof}

We now use Theorem~\ref{theorem:bias} to analyse properties of both regular and generalised Binary Cross-Entropy losses.
First, we show that it is possible to train a model to estimate a prior distribution $P(i)$ exactly using \ourloss{}~loss. 
\begin{corollary}
    If a model is trained using negative sampling with sampling rate $\alpha \le 1$ and \ourloss{}~loss $\LossGBCE$ with $\beta=\alpha$,
    \sfdtd{then the model converges to predict probabilities calibrated with the prior distribution}:\label{col:generalised_unbiased}
    \begin{align}
        \sigma(s_i) = P(i) \label{eq:gen_unbiased}
    \end{align} 
\end{corollary}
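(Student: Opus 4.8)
The plan is to obtain this result as an immediate specialisation of Theorem~\ref{theorem:bias}, since that theorem already pins down the fixed point towards which training with $\LossGBCE$ converges for an \emph{arbitrary} power parameter $\beta$. No new analytical argument is needed: the entire content of the corollary is to substitute the prescribed choice $\beta = \alpha$ into the closed form of Equation~\eqref{eq:theorem} and simplify. I would therefore frame the proof as a one-line consequence of the preceding theorem rather than re-deriving anything from the loss.

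Concretely, I would begin by invoking Theorem~\ref{theorem:bias} to assert that a model trained with $\LossGBCE$ converges to scores satisfying
\begin{align}
    \sigma(s_i) = \frac{\beta P(i)}{\alpha - \alpha P(i) + \beta P(i)}, \quad \forall i \in I.
\end{align}
Setting $\beta = \alpha$ and collecting terms in the denominator, the contribution $-\alpha P(i)$ cancels exactly against $+\beta P(i) = +\alpha P(i)$, leaving $\alpha$ in the denominator and $\alpha P(i)$ in the numerator. Hence $\sigma(s_i) = \alpha P(i)/\alpha = P(i)$ for every $i \in I$, which is precisely Equation~\eqref{eq:gen_unbiased}. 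The cancellation is the only algebraic step, and it is exact.

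The remaining care is in checking that the substitution is legitimate, and this is where I would locate the (very mild) ``obstacle''. There are two side conditions to confirm: (i) the denominator $\alpha$ does not vanish, which holds because $\alpha = k/|I^-| > 0$ whenever at least one negative is sampled ($k \ge 1$); and (ii) the matched choice $\beta = \alpha$ is admissible, which is where the hypothesis $\alpha \le 1$ enters — it keeps $\beta = \alpha$ within the regime $\beta \le 1$ used throughout the paper to temper overconfidence, the default BCE case being $\beta = 1$. It is worth flagging interpretively that $\alpha \le 1$ is not what makes the algebra work (the cancellation holds for any nonzero $\alpha$); rather, it guarantees that tuning $\beta$ down to the sampling rate is feasible, and it is exactly this matching of $\beta$ to $\alpha$ that restores the calibration $\sigma(s_i) = P(i)$ and thereby removes the overconfidence present for the uncorrected choice $\beta = 1$.
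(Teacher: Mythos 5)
Your proof is correct and matches the paper's own argument exactly: the paper likewise obtains Equation~\eqref{eq:gen_unbiased} by substituting $\beta=\alpha$ into Equation~\eqref{eq:theorem} of Theorem~\ref{theorem:bias} and cancelling. Your additional remarks on the non-vanishing denominator and the role of $\alpha \le 1$ are sound but go beyond what the paper records.
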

\begin{proof}
    We can obtain Equation~\eqref{eq:gen_unbiased} by substituting $\beta=\alpha$ in Equation~\eqref{eq:theorem}.
\end{proof}
We now use Theorem~\ref{theorem:bias} to analyse properties of regular Binary Cross-Entropy \sd{loss}. 
\begin{corollary}
    If a model is trained with BCE loss $\LossBCE$ 
    and negative sampling, with  sampling rate $\alpha$, then it converges to predict \sfdtd{scores $s_i$ so that}\label{col:bce_bias}
    \begin{align}
        \sigma(s_i) = \frac{P(i)}{\alpha - \alpha P(i) + P(i)}\label{eq:bias_bce}
    \end{align}\label{col:bce_bias}
\end{corollary}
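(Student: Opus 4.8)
The plan is to exploit the fact that BCE loss is not a new object but simply a special instance of the generalised loss already analysed in Theorem~\ref{theorem:bias}. Specifically, Equation~\eqref{eq:gbce_equal_bce} established that $\LossGBCEb{1}=\LossBCE$, i.e.\ setting the power parameter $\beta=1$ in $\LossGBCE$ recovers ordinary Binary Cross-Entropy. Since Theorem~\ref{theorem:bias} characterises the convergence point of a model trained with $\LossGBCE$ for \emph{arbitrary} $\beta\geq 0$, all the work needed for the corollary has in effect already been done; I only need to instantiate that general result at the particular value $\beta=1$.

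Concretely, I would start the proof by recalling that training with $\LossBCE$ and negative sampling at rate $\alpha$ is identical to training with $\LossGBCE$ at $\beta=1$ and the same sampling rate. I would then invoke Theorem~\ref{theorem:bias}, which guarantees that such a model converges to a score distribution satisfying Equation~\eqref{eq:theorem}, namely $\sigma(s_i)=\tfrac{\beta P(i)}{\alpha-\alpha P(i)+\beta P(i)}$. Substituting $\beta=1$ into this expression immediately collapses the numerator to $P(i)$ and the $\beta P(i)$ term in the denominator to $P(i)$, yielding exactly Equation~\eqref{eq:bias_bce}. This parallels the one-line argument already used for Corollary~\ref{col:generalised_unbiased}, which was obtained by the analogous substitution $\beta=\alpha$.

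There is essentially no genuine obstacle here: the corollary is a direct specialisation of the theorem, and the only fact I rely on beyond the theorem statement itself is the equivalence $\LossGBCEb{1}=\LossBCE$, which is asserted in Equation~\eqref{eq:gbce_equal_bce}. The single point worth stating explicitly, for rigour, is that the hypotheses of Theorem~\ref{theorem:bias} (a sufficiently large number of training samples, gradient-descent convergence to the minimiser of the expected loss, and uniform negative sampling with replacement at rate $\alpha$) carry over unchanged to the BCE setting, so that the theorem is genuinely applicable. Given that, the proof reduces to the substitution described above and the corollary follows.
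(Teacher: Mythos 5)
Your proposal is correct and matches the paper's own proof exactly: the paper likewise invokes Equation~\eqref{eq:gbce_equal_bce} to identify $\LossBCE$ with $\LossGBCEb{1}$ and then substitutes $\beta=1$ into Equation~\eqref{eq:theorem} to obtain Equation~\eqref{eq:bias_bce}. Your additional remark that the hypotheses of Theorem~\ref{theorem:bias} carry over unchanged is a reasonable point of rigour that the paper leaves implicit.
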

\begin{proof}
    According to Equation~\eqref{eq:gbce_equal_bce}, $\LossBCE$ is equal to $\LossGBCE$ with $\beta=1$.
    Substituting $\beta=1$ into Equation~\eqref{eq:theorem} we obtain Equation~\eqref{eq:bias_bce}.
\end{proof}
We can now show that SASRec learns an overconfident \craig{score} distribution:
\begin{corollary}
    \srsd{The SASRec model with $\LossBCE$ and one negative per positive converges to yield scores $s_i$, such that}:
    \label{corollary:sasrec}
    \begin{align}
        \sigma(s_i) = \frac{P(i)|I| - P(i)}{P(i)|I| - 2P(i) + 1} \label{eq:sasrec_bias}
    \end{align}
    \begin{proof}
        SASRec uses one negative per positive, meaning that its sampling rate is equal to: 
        \begin{align}
            \alpha = \frac{1}{|I| - 1} \label{eq:sasrec_sampling_rate}
        \end{align}
        Substituting Equation~\eqref{eq:sasrec_sampling_rate} into Equation~\eqref{eq:bias_bce}, we get Equation~\eqref{eq:sasrec_bias}.
    \end{proof}
\end{corollary}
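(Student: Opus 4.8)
The plan is to specialise the general BCE bias formula of Corollary~\ref{col:bce_bias} to SASRec's particular sampling regime, so that the whole argument reduces to identifying the correct sampling rate $\alpha$ and then performing an algebraic simplification. Since Corollary~\ref{col:bce_bias} already establishes that any BCE-trained model with sampling rate $\alpha$ converges to $\sigma(s_i) = \frac{P(i)}{\alpha - \alpha P(i) + P(i)}$ (Equation~\eqref{eq:bias_bce}), I do not need to revisit the expectation-minimisation argument of Theorem~\ref{theorem:bias}; I only need to plug in the value of $\alpha$ that SASRec uses.

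First I would pin down $\alpha = k/|I^-|$ for SASRec. SASRec draws exactly one negative per positive, so $k = 1$. The pool of candidate negatives for a given target consists of every catalogue item other than the observed positive, hence $|I^-| = |I| - 1$. Combining these gives $\alpha = \frac{1}{|I|-1}$, which is Equation~\eqref{eq:sasrec_sampling_rate}. The one subtlety worth stating explicitly is why the denominator is $|I|-1$ rather than $|I|$: the positive item is excluded from the negative pool, and this off-by-one is exactly what fixes the constant in the final expression.

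Next I would substitute $\alpha = \frac{1}{|I|-1}$ into Equation~\eqref{eq:bias_bce}. To clear the nested fraction, multiply both the numerator and the denominator of $\frac{P(i)}{\alpha - \alpha P(i) + P(i)}$ by $(|I|-1)$. The numerator becomes $P(i)(|I|-1) = P(i)|I| - P(i)$, and the denominator becomes $1 - P(i) + P(i)(|I|-1) = P(i)|I| - 2P(i) + 1$, where the coefficient $-2P(i)$ arises from collecting the $-P(i)$ contributed by the $\alpha P(i)$ term with the $-P(i)$ produced when expanding $P(i)(|I|-1)$. This yields exactly Equation~\eqref{eq:sasrec_bias}.

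I expect no genuine obstacle here: the statement is a direct corollary obtained by a single substitution followed by routine simplification, and all the analytic work has already been done upstream in Theorem~\ref{theorem:bias} and Corollary~\ref{col:bce_bias}. The only places to be careful are the bookkeeping of $|I^-| = |I|-1$ and the sign collection in the denominator; everything else is mechanical.
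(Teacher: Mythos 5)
Your proposal is correct and follows exactly the paper's route: identify SASRec's sampling rate as $\alpha = \frac{1}{|I|-1}$ (one negative drawn from the $|I|-1$ non-positive items) and substitute it into Equation~\eqref{eq:bias_bce} from Corollary~\ref{col:bce_bias}. The only difference is that you spell out the algebraic simplification that the paper leaves implicit, and your arithmetic checks out.
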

Corollary~\ref{corollary:sasrec} explains why SASRec tends to predict very high probabilities for top-ranked items: when an item has a higher-than-average probability of being selected ($P(i) \gg \frac{1}{|I|}$), the term $P(i)|I|$ dominates both the numerator and denominator of Equation~\eqref{eq:sasrec_bias}, meaning that the predicted probability $\sigma(s_i)$ will be very to close to~1. 
\subsection{Relation between BCE and \ourloss{}}\label{ssec:gbce_bce_relation}
\looseness -1 In Section~\ref{ssec:gbce} we showed that \ourloss{}~is equal to regular BCE loss when the \stdtd{power} parameter $\beta$ is set to 1. 
We now show that these two loss functions have a deeper relation, which allows using well-optimised versions of BCE from deep learning frameworks instead of~\ourloss{}. \label{theorem:gbce_bce_equality}%
\begin{theorem}
    Let  $s^{+}$ be the predicted score for a positive item and $s^- = \{s_{i^-_1}, s_{i^-_2}..s_{i^-_{|I^{-}|}}\}$ be the predicted scores for the negative items. 
    Then 
    \begin{align}
        \LossGBCE(s^+, s^-) = \LossBCE(\gamma(s^+), s^-)
    \end{align}
where 
\begin{align}
    \gamma(s^+)= \log\left(\frac{1}{\sigma^{-\beta}(s^+) - 1}\right) \label{eq:scores_transformation}
\end{align}

\end{theorem}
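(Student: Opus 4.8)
The plan is to exploit that $\LossGBCE$ and $\LossBCE$ share exactly the same negative-sample terms and differ only in how the positive item is treated. Writing both losses out from Equations~\eqref{eq:bce_sampled} and~\eqref{eq:gbce}, the summand $\sum_{i \in I_k^{-}}\log(1-\sigma(s_i))$ and the prefactor $\frac{1}{|I_k^-|+1}$ are identical on both sides, and the negative scores $s^-$ are passed through unchanged (only $s^+$ is transformed by $\gamma$). Hence the claimed identity reduces to matching the single positive contribution, i.e.\ to proving $\log(\sigma(\gamma(s^+))) = \log(\sigma^\beta(s^+))$, or equivalently $\sigma(\gamma(s^+)) = \sigma^\beta(s^+)$.

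To establish this, I would set $t = \sigma^\beta(s^+)$ and simplify the argument of $\gamma$. Since $\sigma^{-\beta}(s^+) = 1/t$, we have $\sigma^{-\beta}(s^+) - 1 = (1-t)/t$, so that
\begin{align}
    \gamma(s^+) = \log\left(\frac{1}{\sigma^{-\beta}(s^+)-1}\right) = \log\left(\frac{t}{1-t}\right). \nonumber
\end{align}
The right-hand side is precisely the logit (inverse sigmoid) of $t$. Because the logistic sigmoid $\sigma$ and the logit are mutual inverses on $(0,1)$, applying $\sigma$ gives $\sigma(\gamma(s^+)) = t = \sigma^\beta(s^+)$, exactly as required. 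Substituting this back, the positive terms of the two losses coincide, and together with the identical negative terms and prefactor this yields $\LossGBCE(s^+, s^-) = \LossBCE(\gamma(s^+), s^-)$.

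As for the main obstacle: there is essentially no deep difficulty here, since the result is a direct algebraic verification. The only step requiring a little care is recognising that $\gamma$ is nothing more than the logit applied to $\sigma^\beta(s^+)$, after which invertibility of the sigmoid closes the argument immediately. One should also note implicitly that $\beta \geq 0$ and $\sigma(s^+) \in (0,1)$ guarantee $t \in (0,1)$, so that $1-t > 0$ and all logarithms in $\gamma$ and in the losses are well defined.

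I would present the proof in this order — first isolating the positive term as the sole point of difference, then verifying the sigmoid/logit cancellation — because it makes transparent \emph{why} a power-$\beta$ sigmoid on the positive score is interchangeable with an ordinary BCE evaluated at a shifted score, which is the practically useful content of the statement (it lets one reuse optimised BCE implementations by pre-transforming only the positive logits via $\gamma$).
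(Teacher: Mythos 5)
Your proposal is correct and follows essentially the same route as the paper: both reduce the claim to the single identity $\sigma(\gamma(s^+)) = \sigma^{\beta}(s^+)$ (the negative terms and the prefactor being identical by inspection) and verify it algebraically, the only cosmetic difference being that you recognise $\gamma$ as the logit of $\sigma^{\beta}(s^+)$ whereas the paper expands $\sigma(\gamma(s^+))$ directly and cancels the exponential against the logarithm. Your added remark on well-definedness ($\sigma^{\beta}(s^+) \in (0,1)$) is a small bonus the paper leaves implicit.
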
%
\begin{proof}
   
    According to the definition of the logistic sigmoid function (Equation~\eqref{eq:porb_is_sigmoid}), 
    \begin{align}
        \sigma(\gamma(s^+)) &= \frac{1}{e^{-\gamma(s^+)}+1} \nonumber \\
        &= \frac{1}{e^{-\log\left(\frac{1}{\sigma^{-\beta}(s^+) - 1}\right)}+1} \nonumber\\& \text{\;\;\;\;\;\;\;\;\;(Substituting $-\gamma(s^+)$ with its definition (Eq.~\eqref{eq:scores_transformation}))} \nonumber\\
        &= \frac{1}{e^{\log\left(\sigma^{-\beta}(s^+) - 1\right)}+1} \nonumber\\& \text{\;\;\;\;\;\;\;\;\;(Using properties of the $\log(\cdot)$ function)} \nonumber \\
        &= \frac{1}{\sigma^{-\beta}(s^+) - 1 + 1} \nonumber\\& \text{\;\;\;\;\;\;\;\;\;(The exponent and the logarithm cancel each other out)} \nonumber \\
        & = \frac{1}{\sigma^{-\beta}(s^+)}  = \sigma^{\beta}(s^+)
    \end{align}
       
    Substituting $\sigma^{\beta}(s^+) = \sigma(\gamma(s^+))$ into the definition of $\LossGBCE$ (Equation~\eqref{eq:gbce}) 
    and taking into account the definition of $\LossBCE$ (Equation~\eqref{eq:bce_sampled}) we get the desired equality:
    \begin{align}
     \LossGBCE(s^+, s^-) &=-\frac{1}{|I^-_k| + 1} \left( \log(\sigma^{\beta}(s_{i^+})) + \sum_{i \in I^{-}}\log(1-\sigma(s_i)) \right) \nonumber \\  
      &= -\frac{1}{|I^-_k| + 1} \left( \log(\sigma(\gamma(s^+))) + \sum_{i \in I^{-}}\log(1-\sigma(s_i)) \right) \nonumber \\
      &= \LossBCE(\gamma(s^+),s^-) \nonumber
    \end{align}
\end{proof}

In practice, Theorem~\ref{theorem:gbce_bce_equality} allows us to transform the predicted positive scores by using Equation~\eqref{eq:scores_transformation} and then train the model using the regular BCE loss, instead of using \ourloss{}~directly. 
This \dd{is actually} preferable because many machine learning frameworks have efficient and numerically stable implementations for standard loss functions such as BCE loss. Indeed, in our implementation, we also rely on Equation~\eqref{eq:scores_transformation} score transformation and regular BCE loss instead of using \ourloss{}~directly.

\subsection{\stdtd{Calibration Parameter} $t$} \label{ssec:t}
\looseness -1 As shown in Section~\ref{ssec:gbce}, setting \stdtd{the power parameter} $\beta=1$ in \ourloss{}~resembles the regular BCE loss, whereas setting $\beta$ equal to the sampling rate $\alpha$ results in learning \stdtd{a fully calibrated distribution}. This means that reasonable values of the $\beta$ parameter lie in the interval $[\alpha .. 1]$. In practice, we found working with this interval inconvenient: we usually do not control the $\alpha$ parameter directly and instead infer it from the number of negatives and size of the dataset. Similarly, the possible values \sdd{of} $\beta$ depend on these variables as well. To make the interval of possible values independent from $\alpha$, we control the \stdtd{power parameter $\beta$ indirectly with the help of a \emph{calibration parameter $t$}, which \craig{adjusts} $\beta$}  as follows:
\begin{align}
    \beta = \alpha \left(t\left(1 - \frac{1}{\alpha}\right) + \frac{1}{\alpha}\right)\label{eq:t}
\end{align}
This \stdtd{substitution} makes model configuration simpler: we select $t$ in the interval $[0..1]$, where
$t=0$ ($\beta=1$) corresponds to regular BCE loss, and $t=1$ ($\beta=\alpha$) corresponds to the fully \stdtd{calibrated version of \ourloss{}, which drives the model to estimate prior $P(i)$ exactly (according to Corollary~\ref{col:generalised_unbiased}).} %

\subsection{\ourmodel{}}\label{ssec:gsasrec}
\emph{\ourmodel{}}~(generalised SASRec) is a version of the SASRec model with an increased number of negatives, trained with \ourloss{}~loss. 
Compared with SASRec, \ourmodel{}~has two extra hyperparameters: (i) number of negative samples per positive $k \in [1..|I^-|]$, and (ii) parameter $t \in [0..1]$, which indirectly controls the power parameter $\beta$ 
in \ourloss{}~using to Equation~\eqref{eq:t}.
In particular, when $k=1$ and $t=0$, \ourmodel{}~\stdtd{is} the original SASRec model, as SASRec uses 1 negative per positive and \ourloss{} becomes BCE when $t=0$. \rsm{While our primary focus is on the SASRec model, it is possible to apply \ourloss{} with other models; as an example, we use it also with 
 BERT4Rec (see Section~\ref{ssec:gbce_effect}).}

In the next section we empirically evaluate~\ourmodel{}~and show that its generalisations over SASRec are indeed beneficial and allow it to match BERT4Rec's performance, while retaining negative sampling. 
\section{Experiments}\label{sec:experiments}
We design our experiments to answer the following research questions:
\label{ssec:rqs}
\begin{enumerate}[font={\bfseries}, label={RQ\arabic*}, wide, labelwidth=!, labelindent=0pt]
    \item{How does negative sampling affect BERT4Rec's performance gains over SASRec?}
    \item{What is the effect of \ourloss{}~on predicted item probabilities?}
    \label{rq:gbce_effect}
    \item {What is the effect of negative sampling rate and parameter $t$ on the performance of \ourmodel{}?}
    \label{rq:alpha_t_performance}
    \item{How does gBCE loss affect the performance of SASRec and BERT4Rec models trained with negative sampling?}
    \item How does gSASRec perform in comparison to state-of-the-art \sd{sequential} recommendation models?
    \label{rq:outperforms_sota}
    \label{rq:popularity}
\end{enumerate}

\subsection{Experimental Setup}
\subsubsection{Datasets}
\looseness -1 We experiment with three datasets: MovieLens-1M~\cite{harperMovieLensDatasetsHistory2015a}, Steam~\cite{pathakGeneratingPersonalizingBundle2017} and Gowalla~\cite{Gowalla}.
There are known limitations with MovieLens-1M~\cite{PetrovRSS22, petrov2023rss}: it is a movies ratings dataset, and users \scrc{may not} rate items in the same order as they watch them, so the task, in this case, may be described as recommending movies to rate (and not to watch) next. However, it remains one of the most popular \sdd{benchmarks} for evaluating \dd{sequential} recommender systems~\cite{SASRec, BERT4Rec, Bert4RecRepro, DuoRec, CBiT}, and more \sdd{importantly,} researchers use it consistently without additional preprocessing (the dataset is already preprocessed by its authors). %
This consistency allows us to compare results reported between different papers, and therefore we find experimenting with this dataset important. 
To stay consistent with previous research~\cite{BERT4Rec, Bert4RecRepro}, we use preprocessed versions of the MovieLens-1M and Steam datasets provided in the BERT4Rec repository\footnote{\url{https://github.com/FeiSun/BERT4Rec/tree/master/data}} and do not apply any additional preprocessing. These datasets have relatively small \dd{numbers} of items and therefore are suitable for training unsampled models such as BERT4Rec. 

As a demonstration that \ourmodel{}~is suitable for larger datasets, we also use \sdd{the} Gowalla dataset, which is known to be problematic for BERT4Rec~\cite{PetrovRSS22,petrov2023rss}. \dd{For this dataset, and following common practice~\cite{SASRec, Caser, FPMC, BERT4Rec, PetrovRSS22,petrov2023rss}, we remove users with less than 5 interactions}. Table~\ref{tb:datasets} lists salient the characteristics of all three datasets.   
We split data using \sdd{the} standard \emph{leave-one-out} approach, where we leave the last interaction for each user in the test dataset. Additionally, \sdd{for} each dataset, we randomly selected 512 users - for \sdd{these} users, we select their second last interaction and include them into a validation dataset, which we use for \dd{hyperparameter} tuning as well as to control model early stopping\footnote{All code for this paper is available at \href{https://github.com/asash/gsasrec}{https://github.com/asash/gsasrec}}. 

\begin{table}
\caption{Experimental Datasets.}\label{tb:datasets}
\centering
\resizebox{\linewidth}{!}{
\begin{tabular}{lrrr}
\toprule
Dataset& Users &  Items &  Interactions \\
\midrule
MovieLens-1M&       6,040 &       3,416 &            999,611 \\
Steam &     281,428 &      13,044 &           3,488,885 \\  
Gowalla  &      86,168 &    1,271,638 &           6,397,903 \\
\bottomrule
\end{tabular}
}
\end{table}
\subsubsection{Metrics}
\looseness -1 Until recently, a somewhat common approach in evaluating
recommender systems \rsm{on sampled metrics using} only small number of items, 
but it has been shown that this leads to incorrect evaluation results in general~\cite{kricheneSampledMetricsItem2022, canamaresTargetItemSampling2020}, 
and specifically for sequential recommender systems~\cite{dallmannCaseStudySampling2021, Bert4RecRepro}. Hence, we always evaluate all item scores at the inference stage. 
Following~\cite{Bert4RecRepro}, we evaluate our models using \sdd{the} popular Recall and NDCG metrics measured at cutoff 10. \dd{We also calculate} Recall at cutoff 1, because according to Equation~\eqref{eq:sasrec_bias}, we expect SASRec to be more overconfident on the highest-ranked metrics, and mitigating overconfidence should have a bigger effect on metrics measured at the highest cutoff. 

\subsubsection{Models}
\looseness -1 In our experiments, we compare \ourmodel{}~with the regular SASRec model, which serves as the backbone of our work.\footnote{\dd{Recall that SASRec uses BCE as a loss function - we do not test pairwise and listwise loss functions, because, as mentioned in Section~\ref{ssec:neg_sampling}, they are expensive to apply on GPUs, and (e.g.) LambdaRank~\cite{burgesRanknetLambdarankLambdamart2010} does not improve SASRec~\cite{PetrovRSS22, petrov2023rss}.}} We also use BERT4Rec as a state-of-the-art baseline.
\dd{For all models} we set the sequence length to 200.

Additionally, we use two simple baselines: a non-personalised popularity model, which always recommends the most popular items; and the classic Matrix Factorisation model with BPR~\cite{rendleBPRBayesianPersonalized2009} loss. 
Our implementation of SASRec (and \ourmodel{}) are based on the original 
code\footnote{\href{https://github.com/kang205/SASRec/}{https://github.com/kang205/SASRec/}}, 
whereas our implementation of BERT4Rec is based on the more efficient implementation\footnote{\href{https://github.com/asash/bert4rec\_repro}{https://github.com/asash/bert4rec\_repro}} from \scrc{our} recent reproducibility paper~\cite{Bert4RecRepro}. To ensure that the models are fully trained, we use an early stopping mechanism to stop training if NDCG@10 measured on the validation dataset has not improved for 200 epochs. 
\subsection{Results}
\subsubsection{RQ1. How does negative sampling affect BERT4Rec's performance gains over SASRec}\label{sssec:results:bert4rec_performance}

\looseness -1 To answer our first research question, we train both BERT4Rec and SASRec on the Steam and MovieLens-1M datasets using the sampling strategies, \srsd{which were originally used in these models}:
(i) one negative per positive and BCE loss (as in SASRec) and 
(ii) all negatives per positive and Softmax loss (as in BERT4Rec).\footnote{\srsd{In this RQ, our goal is to better understand BERT4Rec's gains over SASRec, so we only experiment with their original loss functions and sampling strategies; We apply other loss functions, such as Sampled Softmax, and more negative samples in Section~\ref{ssec:gbce_effect}}.} We use the original training objectives for both architectures: item masking in BERT4Rec and sequence shifting in SASRec; \stdtd{we also \sdd{retain} the architecture differences in the models (i.e.\ we keep uni-directional attention in SASRec and bi-directional attention from BERT4rec)}.
The results of our comparison are summarised in Table~\ref{table:sasvsbert}. The \sdd{magnitude of} SASRec and BERT4Rec results are aligned with those reported in~\cite{Bert4RecRepro}.
As can be seen from the table, in all four cases, changing \craig{of the} sampling strategy from the one used by SASRec to the one used in BERT4Rec significantly improves effectiveness. 
For example, SASRec's NDCG@10 on MovieLens-1M \craig{is} improved from 0.131 to 0.169 (+29.0\%) \stdtd{by removing negative sampling and applying Softmax loss.  BERT4Rec achieves a larger improvement of NDCG@10 on Steam (0.0513 $\rightarrow$ 0.0746: +45.4\%) when changing the sampling strategy from 1 negative to all negatives.}
In contrast, the effect of changing the architecture is moderate (e.g. statistically \craig{indistinguishable} in 2 out of 4 cases), and frequently negative (3 cases out of four, 1 significant). 

\begin{table}
\caption{Effects of model architecture and negative sampling on NDCG@10, for the MovieLens-1M (ML-1M) and Steam datasets. \stdtd{* denotes a significant change ($pvalue < 0.05$) in NDCG@10 caused by negative sampling (comparing horizontally) or model architecture (comparing vertically).}
}\label{table:sasvsbert}

\resizebox{\linewidth}{!}{

\begin{tabular}[b]{lllll}
    \toprule
                                            Dataset& \begin{tabular}[l]{@{}l@{}}Negative sampling \\
                                            and loss function$\rightarrow$\\ Architecture$\downarrow$\end{tabular} & \multicolumn{1}{l}{\begin{tabular}[l]{@{}l@{}}1 negative\\ per positive;\\ BCE Loss \\ (as SASRec)\end{tabular}} & \multicolumn{1}{c}{\begin{tabular}[l]{@{}l@{}}No negative \\sampling; \\ Softmax Loss \\ (as  BERT4Rec)\end{tabular}} & \multicolumn{1}{l}{\textbf{\begin{tabular}[l]{@{}l@{}}Negative \\ sampling \\ and loss \\effect\end{tabular}}} \\ \hline
                                            & SASRec                                                                   & 0.131                                                                     & 0.169                                                                        & {\color[HTML]{548235} \textbf{+29.0\%*}}                                                        \\
                                            & BERT4Rec                                                                 & 0.123                                                                     & 0.161                                                                        & {\color[HTML]{548235} \textbf{+30.8\%*}}                                                        \\
\multirow{-3}{*}{ML-1M}                     & \textbf{Architecture effect}                                                    & {\color[HTML]{FF0000} \textbf{-6.1\%}}                                    & {\color[HTML]{FF0000} \textbf{-4.7\%}}                                       & {\color[HTML]{548235} \textbf{}}                                                                \\ \hline
\multicolumn{1}{l}{}                        & SASRec                                                                   & 0.0581                                                                    & 0.0721                                                                       & {\color[HTML]{548235} \textbf{+24.1\%*}}                                                        \\
\multicolumn{1}{l}{}                        & BERT4Rec                                                                 & 0.0513                                                                    & 0.0746                                                                       & {\color[HTML]{548235} \textbf{+45.4\%*}}                                                        \\
\multicolumn{1}{l}{\multirow{-3}{*}{Steam}} & \textbf{Architecture effect}                                                    & {\color[HTML]{FF0000} \textbf{-11.7\%*}}                                  & {\color[HTML]{548235} \textbf{+3.4\%*}}                                      &                                                                                                 \\ \bottomrule
\end{tabular}
}
\end{table}

\looseness -1 In the answer to RQ1, we \craig{conclude} that the absence of negative sampling plays the key role in BERT4Rec's success over SASRec, whereas \craig{any gain by applying BERT4Rec's bi-directional attention architecture is only moderate and frequently negative}. Therefore, the performance gains of BERT4Rec over SASRec \dd{can} be attributed to the absence of negative sampling and Softmax loss and not to its architecture and training objective. \dd{This is contrary to the explanations of the original BERT4Rec authors in~\cite{BERT4Rec}, who attributed its superiority to its bi-directional attention mechanism (on the same datasets)}. \craig{We now analyse how gBCE changes the distribution of predicted probabilities.}

\subsubsection{RQ2. Effect of \ourloss{}~on predicted interaction probabilities.}\label{sssec:results:effect_of_ourloss}
\looseness -1 To analyse the effects of \ourloss{} on predicted probabilities, we train three models: a regular SASRec model and two configurations of ~\ourmodel{}: a first with 64 negatives and $t=0.5$ and a second with 256 negatives and $t=1.0$. Our goal is to compare prior probabilities $P(i)$ with probabilities predicted by the model $\hat{p}_i$. As we discuss in Section~\ref{ssec:background:sequential_recsys}, $P(i)$ is unknown, so direct measurement of such relation is hard. Hence, as a substitute for $P(i)$, we use the popular mean Precision@K metric, which according to Cormack et al.~\cite{cormackEstimatingPrecisionRandom1999} can be seen as a measurement of the conditional probability of an item being relevant, given its rank is less than K. We compare this metric with the average predicted probability of items \stdtd{retrieved at rank} less than K. We perform this comparison for cutoffs K in the range [1..100]. 
\srsf{Figure~\ref{fig:pred_actual} displays the comparison results for MovieLens-1M and Steam datasets, illustrating the expected theoretical relationship between Precision@K and Predicted Probability@K based on Theorem~\ref{theorem:bias}.}
\begin{figure*}[tb]
    \centering
    \subfloat[MovieLens-1M]{
        \includegraphics[height=3.5cm]{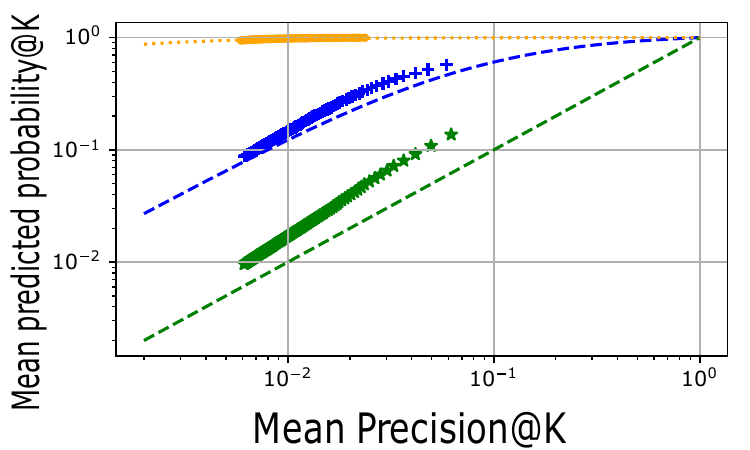}\label{subfig:pred_actual:ml1m}}
    \subfloat[Steam]{
        \includegraphics[height=3.5cm]{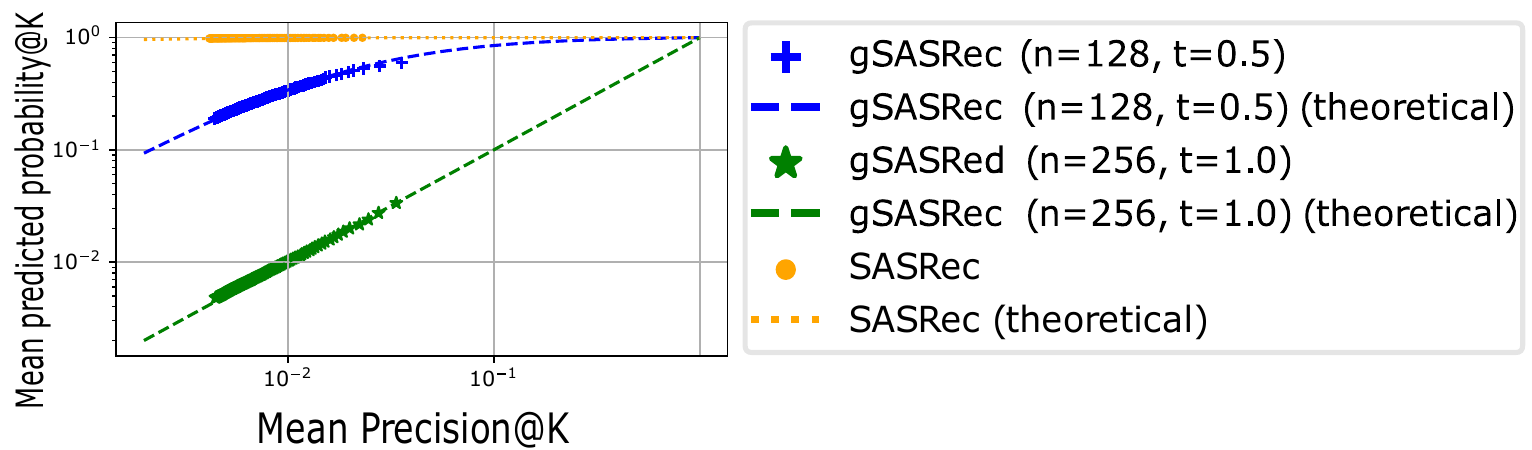}\label{subfig:pred_actual:steam}}
    \caption{Relation between Mean Precision@K metric and Mean predicted probability@K for cutoffs K in $[1..100]$ range.
    The figure also includes theoretical prediction for the relation according to Theorem~\ref{theorem:bias}.}\label{fig:pred_actual}
\end{figure*}

\looseness -1 \srs{Figure~\ref{subfig:pred_actual:steam} shows that the theoretical prediction from Theorem~\ref{theorem:bias} closely matches the observed relationship between Precision@K and Predicted Probability@K in the Steam dataset. In the MovieLens-1M dataset (Figure~\ref{subfig:pred_actual:ml1m}), a slight discrepancy appears between the theoretical prediction and observed relationship, likely because the smaller number of users in the dataset doesn't meet the requirement of Theorem~\ref{theorem:bias} for an adequate amount of training samples.}

\looseness -1 Despite these small discrepancies, the relation follows the trends expected from our theoretical analysis. In particular, Figure~\ref{fig:pred_actual} shows that as expected from Corollary~\ref{corollary:sasrec}, SASRec is indeed prone to overconfidence and on average predicts probabilities very close to 1 for all \stdtd{ranks less than 100}. In contrast, \sd{the probabilities predicted by \ourmodel{} are considerably less than 1}. For example, for MovieLens-1M, \ourmodel{}~trained with 128 negatives and $t=0.5$, on average predicts probability 0.57 at K=1, while the version with 256 negatives and $t=1.0$ predicts probability 0.13 at the same cutoff. \stdtd{Together, this analysis shows that \ourmodel{} trained with \ourloss{} successfully mitigates the overconfidence problem of SASRec.} %
Furthermore, from the figure we also see that when parameter $t$ is set to 1, the mean predicted probability is well-calibrated with mean precision at all \stdtd{rank cutoffs (particularly on the Steam dataset)}. This is well-aligned with Corollary~\ref{col:generalised_unbiased}, which states that when parameter $\beta$ in \ourloss{} is set equal to the sampling rate (i.e.\ setting parameter $t=1$) results in learning in fully calibrated probabilities. 
Overall in answer to RQ2, we conclude that \ourloss{}~successfully mitigates the overconfidence problem, and in a manner that is well-aligned with our theoretical analysis. \craig{We next turn to the impact of \ourloss{} on effectiveness.}

\subsubsection{RQ3. Effect of negative sampling rate and parameter $t$ on the performance of \ourmodel{}}

\begin{figure}
    \includegraphics[width=\linewidth]{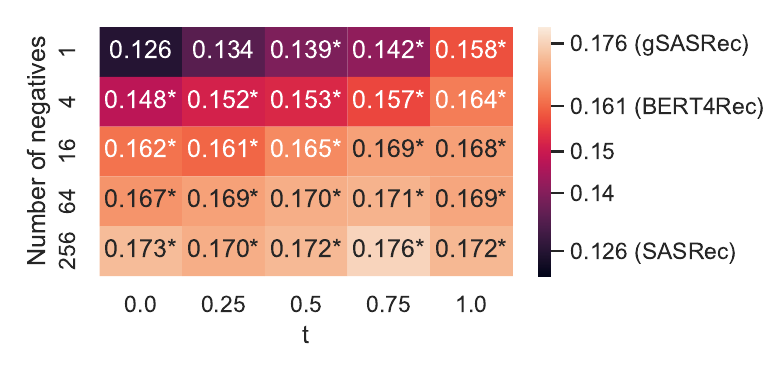}
    \caption{\ourmodel{}: Effect of varying number of negatives and \sfdtd{calibration} parameter $t$ on NDCG@10, MovieLens-1M. 
    * denotes a significant improvement over SASRec ($pvalue < 0.05$, Bonferroni multiple test correction).
    }\label{fig:gridsearch}
\end{figure}
\looseness -1 In comparison to SASRec, \ourmodel{} has two additional hyperparameters: the number of negative samples and the parameter $t$, which adjusts probability calibration. To explore the impact of these parameters on performance, we conduct a grid search: selecting the negative sample count from [1, 4, 16, 64, 256] and the calibration parameter $t$ from [0, 0.25, 0.5, 0.75, 1.0].

\looseness -1 Figure~\ref{fig:gridsearch}  \dd{portrays} our grid search on MovieLens-1M (on others datasets we observed \sdd{a} similar pattern and omit their figures for brevity). \craig{From the figure, we observe that}, as expected from the theoretical analysis, both $t$ and the number of negatives have a positive effect on model \sfdtd{effectiveness}. For example, when the number of negatives is set to 1, varying $t$ from 0 to 1 allows increasing NDCG@10 from 0.126 to 0.158 \stdtd{(+25\%, \sm{significant}, compared to SASRec, which is also \ourmodel{} with 1 negative and calibration $t=0$)}. Interestingly, the result of \ourmodel{}~with 1 negative and $t=1$ is similar to what BERT4Rec achieves with all negatives (\craig{0.158 vs.\ 0.161:} -1.86\%, not significant). We also observe that when the number of negatives is higher, setting \sdd{a} high value of $t$ is less important. For example, when the model is trained with 256 negatives (7.49\% sampling rate), the model achieves high effectiveness with all values of $t$. This is also not surprising -- by design, more negative samples and higher values of $t$ should have a similar effect \sdd{in \ourloss{}}. 
During our experiments, we also observed that setting parameter $t$ very close to 1 also increased the training time of the model. Keeping this in mind, in practical applications we recommend setting $t$ between 0.75 and 0.9, and the number of negatives between 128~and~256 -- this combination works well on all datasets, converging to results that are close to \craig{the best observed} without increasing training time. \craig{This answers RQ3.}

\begin{figure*}[tb]
    \centering
    \subfloat[Recall@1]{
        \includegraphics[width=0.5\textwidth]{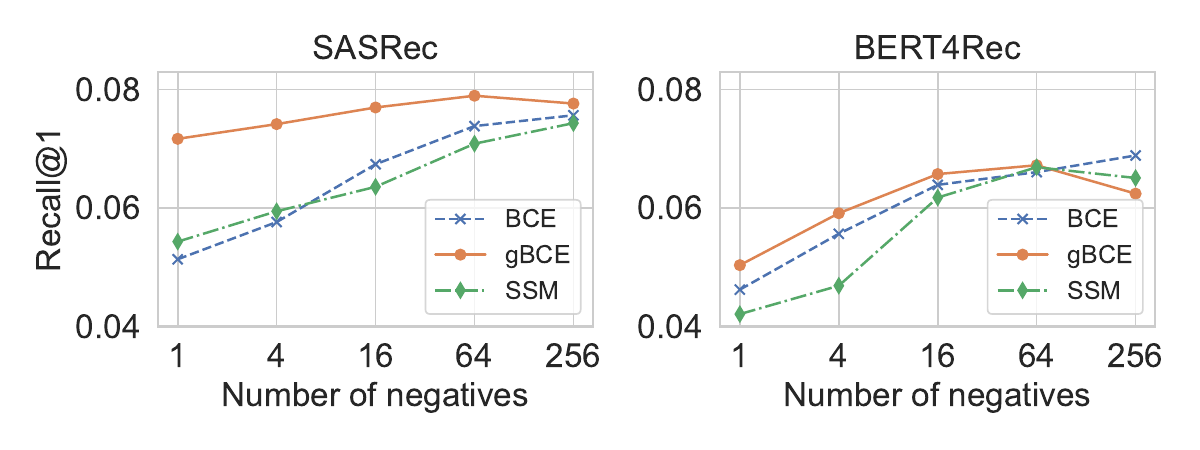}
    }
    \subfloat[NDCG@10]{
        \includegraphics[width=0.5\textwidth]{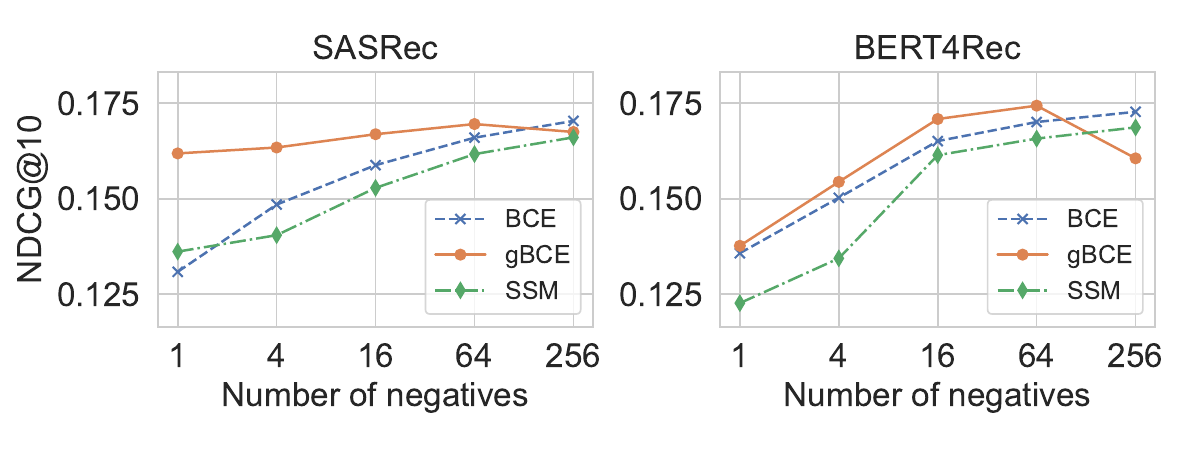}
    }
    \caption{Performance of SASRec and BERT4Rec architectures, trained on the MovieLens-1M dataset with a variable number of negatives and various loss functions. BCE is a classic Binary Cross Entropy Loss, gBCE - Generalised Binary Cross-entropy (t=1.0), SSM - Sampled Softmax Loss with uniform sampling. 
    }
    \label{fig:decoupled_loss}
\end{figure*}

\subsubsection{\srs{RQ4. Effect of gBCE loss on negatively sampled SASRec and BERT4Rec}}
\label{ssec:gbce_effect}
\srs{\srsf{To investigate the effect of gBCE on SASRec and BERT4Rec models with negative sampling
}, we train the models 
with the number of negative samples selected from [1, 4, 16, 64, 256] and the loss function selected from \srsf{[BCE, gBCE,  Sampled Softmax Loss (SSM)] on the MovieLens-1M dataset}. In this experiment, we use a fully-calibrated version of gBCE ($t=1.0$). 
Figure~\ref{fig:decoupled_loss} summarises the results of the experiment.}
\looseness -1 \srs{
     As we can see from the figure, gBCE performs better than both BCE and Sampled Softmax loss when the number of negatives is small. For example, for BERT4Rec trained with 4 negatives, gBCE has a higher Recall@1 (0.059) than both BCE (0.055; -5.8\% compared with gBCE) and Sampled Softmax (0.046, -20\%) and has the highest NDCG@10 of 0.154, while BCE has NDCG@10 of 0.150 (-2.6\% compared with gBCE) and Sampled Softmax has NDCG@10 of 0.134 (-12.9\%). 
    In the case of SASRec, the difference is even larger when the number of negatives is small (recall that SASRec trained with gBCE is also \ourmodel{}). For example, with 16 negatives, gBCE achieves Recall@1 0.0769, BCE achieves 0.0673 (-12.5\%), and Sampled Softmax achieves 0.0635 (-17.5\%).
   \srsf{We hypothesise that gBCE affects SASRec more than BERT4Rec due to their training objectives. SASRec predicts the next item in a sequence, while BERT4Rec predicts randomly masked items. Consequently, altering SASRec's loss function directly impacts its performance in next-item prediction. In contrast, changing BERT4Rec's loss function only affect the masking task which is less directly related to the next item prediction task~\cite{PetrovRSS22, petrov2023rss}.}}
\srs{
    \srsf{On the other hand, when more negatives are sampled, gBCE becomes less beneficial.} For example, with 256 negatives, all three loss functions achieve similar NDCG@10 (0.1674 gBCE; 0.1703 (+1.7\%)  BCE and 0.1660 (-0.08\%) Sampled Softmax). This is an expected result because 256 negatives represent a significant proportion of all negatives \srsf{(7.5\%)}, and overconfidence becomes less of an issue for BCE and Sampled Softmax.  
}

\rsm{In conclusion, for RQ4, gBCE outperforms BCE and Sampled Softmax in SASRec and BERT4Rec with few negatives; improvement is larger in SASRec. However, with many negatives, traditional loss functions like BCE and Sampled Softmax work well unaltered, but high sampling rates are impractical due to memory and computational constraints.}
 
\subsubsection{RQ5. \ourmodel{} performance in comparison to state-of the-art \sd{sequential} recommendation models.}
\label{ssec:sota_performance}
\looseness -1 To answer our last research question, we compare~\ourmodel{} with the baseline models. \srsd{We also add to comparison a version of SASRec trained with full Softmax  (without sampling) because, as we discuss in RQ1, it exhibits SOTA performance; however, we omit  non-standard versions of BERT4Rec and SASRec trained with BCE or Sampled Softmax, because as we report in RQ4 they are not beneficial compared to gBCE. We also exclude BERT4Rec with gBCE from our analysis because, as per RQ4, gSASRec achieves superior results when measured by Recall@1 and similar results when evaluated by NDCG@10.}
After tuning hyperparameters on the validation set, we report the results of gSASRec with 
128 negatives and $t=0.9$ for Steam and Gowalla, and with 256 negatives and $t=0.75$ for MovieLens-1M.  Table~\ref{tb:results} summarises the results of our evaluation. 
The table shows that \ourmodel{} achieved the best or the second best result \stdtd{on all datasets according to all metrics}.
Indeed, on the smaller datasets (MovieLens and \srs{Steam}), where we were able to train BERT4Rec and \scrc{SASRec} without sampling, \ourmodel{} performs similarly to the best unsampled model (e.g.\  +4.1\% NDCG@10 on MovieLens-1M compared to SASRec-Softmax (not significant) or -1.74\% compared to BERT4Rec on \scrc{Steam}, significant). Interestingly, on MovieLens-1M, both SASRec-Softmax (our version of SASRec trained without negative sampling) and \ourmodel{} \sm{significantly} improve Recall@1, suggesting that at least in some cases SASRec's unidirectional architecture may be beneficial. \stdtd{This also echoes our observations while analysing RQ1.}

\begin{table*}[t]
\caption{Evaluation results.  Bold denotes the best model on the dataset for that metric and underlined is the second-best model. *  denotes a significant difference with the best-performing model ($pvalue < 0.05$). \srs{SASRec-Softmax is a SASRec-based model trained without negative sampling and Softmax loss (as BERT4Rec)}.}\label{tb:results}
\resizebox{1.0\textwidth}{!}{
\begin{tabular}{|llll|llllllllllll|}
\hline
\multicolumn{4}{|c|}{\multirow{2}{*}{Model}}                                                                              & \multicolumn{12}{c|}{Datasets}                                                                                                                                                                                                                                                                                                                                                                                                                                                                                                                                                                                                                                                                                     \\ \cline{5-16}
\multicolumn{4}{|c|}{}                                                                                                    & \multicolumn{4}{c|}{MovieLens-1M}                                                                                                                                                                                                           & \multicolumn{4}{c|}{Steam}                                                                                                                                                                                                                  & \multicolumn{4}{c|}{Gowalla}                                                                                                                                                                                           \\ \hline
Category                   & Model          & \begin{tabular}[c]{@{}l@{}}Negative \\ Sampling\end{tabular} & Loss         & \begin{tabular}[c]{@{}l@{}}Recall\\ @1\end{tabular} & \begin{tabular}[c]{@{}l@{}}Recall\\ @10\end{tabular} & \begin{tabular}[c]{@{}l@{}}NDCG\\ @10\end{tabular} & \multicolumn{1}{l|}{\begin{tabular}[c]{@{}l@{}}Time\\ (min)\end{tabular}} & \begin{tabular}[c]{@{}l@{}}Recall\\ @1\end{tabular} & \begin{tabular}[c]{@{}l@{}}Recall\\ @10\end{tabular} & \begin{tabular}[c]{@{}l@{}}NDCG\\ @10\end{tabular} & \multicolumn{1}{l|}{\begin{tabular}[c]{@{}l@{}}Time\\ (min)\end{tabular}} & \begin{tabular}[c]{@{}l@{}}Recall\\ @1\end{tabular} & \begin{tabular}[c]{@{}l@{}}Recall\\ @10\end{tabular} & \begin{tabular}[c]{@{}l@{}}NDCG\\ @10\end{tabular} & \begin{tabular}[c]{@{}l@{}}Time\\ (min)\end{tabular} \\ \hline
\multirow{2}{*}{Baselines} & Popularity     & N/A                                                          & N/A          & 0.005*                                              & 0.036*                                               & 0.017*                                             & \multicolumn{1}{l|}{0.0}                                                  & 0.0077*                                             & 0.0529*                                              & 0.0268*                                            & \multicolumn{1}{l|}{0.0}                                                  & 0.0011*                                             & 0.0081*                                              & 0.0041*                                            & 0.1                                                  \\
                           & MF-BPR         & Yes                                                          & BPR          & 0.010*                                              & 0.075*                                               & 0.037*                                             & \multicolumn{1}{l|}{0.1}                                                  & 0.0071*                                             & 0.0393*                                              & 0.0206*                                            & \multicolumn{1}{l|}{0.4}                                                  & 0.0083*                                             & 0.0282*                                              & 0.0170*                                            & 2.1                                                  \\ \hline
\multirow{2}{*}{Unsampled} & BERT4Rec       & No                                                           & Softmax & 0.058*                                              & \underline{0.294}                                          & 0.161*                                             & \multicolumn{1}{l|}{86}                                                   & \underline{0.0281}                                        & \textbf{0.1379}                                      & \textbf{0.0746}                                    & \multicolumn{1}{l|}{642}                                                  & \multicolumn{4}{l|}{\multirow{2}{*}{\begin{tabular}[c]{@{}l@{}}Infeasible (requires $>$100GB of\\ GPU memory, see Section~\ref{sec:intro})\end{tabular}}}                                                                                           \\      
                           & SASRec-Softmax & No                                                           & Softmax & \underline{0.073}                                         & 0.293                                                & \underline{0.169}                                        & \multicolumn{1}{l|}{9}                                                    & 0.0280                                              & 0.1323*                                              & 0.0721*                                            & \multicolumn{1}{l|}{80}                                                   & \multicolumn{4}{l|}{}                                                                                                                                                                                                  \\ \hline
\multirow{2}{*}{Sampled}   & SASRec         & Yes                                                          & BCE          & 0.046*                                              & 0.247*                                               & 0.131*                                             & \multicolumn{1}{l|}{13}                                                   & 0.0193*                                             & 0.1121*                                              & 0.0581*                                            & \multicolumn{1}{l|}{32}                                                   & \underline{0.0505*}                                       & \underline{0.1831*}                                        & \underline{0.1097*}                                      & 145                                                  \\
                           & gSASRec        & Yes                                                          & gBCE         & \textbf{0.082}                                      & \textbf{0.300}                                       & \textbf{0.176}                                     & \multicolumn{1}{l|}{23}                                                   & \textbf{0.0283}                                     & \underline{0.1355}                                         & \underline{0.0735}                                       & \multicolumn{1}{l|}{58}                                                   & \textbf{0.0782}                                     & \textbf{0.2590}                                      & \textbf{0.1616}                                    & 191                                                  \\ \hline
\end{tabular}
}
\end{table*}
\looseness -1 Crucially, \ourmodel{}~always \sm{significantly} outperforms the regular SASRec model (+34\% NDCG@10 on MovieLens-1M, +26\% on Steam, +47\% on Gowalla). The result on Gowalla is particularly important, as it demonstrates that \ourmodel{} is suitable for datasets with more than 1 million items, and it improves SASRec's results by a large margin on \sdd{this large dataset}. 
\sm{From Table~\ref{tb:results} we also see that all versions of SASRec (including \ourmodel{}) require less training time than BERT4Rec. For example, on \sdd{MovieLens,} \ourmodel{} is 73.2\% faster to train compared to \scrc{BERT4Rec} (23 minutes vs.\ 85 minutes) and, on Steam, \ourmodel{} is 90.9\% faster (58 minutes vs.\  642 minutes). \srsf{However, we also see that gSASRec requires more training time than SASRec (e.g.\ 58 vs.\ 32 minutes on Steam); we explain that by the fact that more accurate probabilities estimation with gBCE requires more training epochs to converge (238 epochs vs.\ 170 epochs in our experiment).}}

\balance
\looseness -1 Finally,  for MovieLens-1M, we compare the results achieved by \ourmodel{} with \dd{those} of the most recently proposed models in the literature, which report the best results, namely ALBERT4Rec~\cite{Bert4RecRepro} (an effective model similar to BERT4Rec, but based on ALBERT~\cite{ALBERT}), and two contrastive models:  \stdtd{DuoRec~\cite{DuoRec}, and CBiT~\cite{CBiT}}. All papers from our selection \sdd{the} use the same data-splitting strategy and unsampled metrics, so the results are comparable. Table~\ref{tb:reported_sota} summarises this comparison. As we can see from the table, all these publications report Recall@10 close to 0.3, which is similar to what we obtain with \ourmodel{}. \scrc{However, only \ourmodel{} achieves \scrci{an} NDCG@10 above 0.17. Furthermore, as observed from Figure~\ref{fig:gridsearch}, this result is not a one-off occurrence but a consistent outcome when the model is trained with 256 negatives, making it unlikely to be a statistical fluctuation.} This is likely due to \dd{its} better focus on highly-ranked items, as~\ourloss{} is specifically designed for improving overconfidence in highly-scored items. 

\looseness -1 Overall, our experiments show  that \ourmodel{}~performs on par with SOTA models, retaining the negative sampling required for use on big catalogues \sm{and converging faster than BERT4Rec}.
\balance
\section{Conclusions}\label{sec:conclusion}
\begin{table}
\caption{Comparison of gSASRec with recent reported results on MovieLens-1M. Bold indicates the best value.}\label{tb:reported_sota}
\centering
\begin{tabular}{lll}
\toprule
                                                      & Recall@10      & NDCG@10        \\ \midrule
DuoRec~\cite{DuoRec}            & 0.294          & 0.168          \\
ALBERT4Rec~\cite{Bert4RecRepro} & 0.300          & 0.165          \\
CBiT~\cite{CBiT}               & \textbf{0.301} & 0.169          \\
gSASRec                                               & 0.300          & \textbf{0.176} \\\bottomrule
\end{tabular}
\end{table}
\looseness -1 In this paper, we \scrci{studied} the \scrci{impact} of negative sampling on sequential recommender systems. We showed (theoretically and empirically) that negative sampling coupled with Binary Cross-Entropy loss (a popular combination used by many sequential models) leads to \craig{a shifted score distribution, called overconfidence}.
We showed that overconfidence is the only reason why SASRec underperforms compared to \scrci{the state-of-the-art} BERT4Rec. \scrci{Indeed,} when we control for negative sampling, the two models perform similarly. We proposed \sdd{a solution} to the \scrci{overconfidence problem} in \scrci{the} form of \ourloss{}~and theoretically \sdd{proved} that it can mitigate overconfidence. We further proposed \ourmodel{}, which uses~\sdd{\ourloss{},} and experimentally showed that \sdd{\scrci{it can significantly} outperform} the best unsampled models \sdd{(e.g.\ +9.47\% NDCG@10 on \scrci{the} MovieLens-1M \scrci{dataset} compared to BERT4Rec) requiring less training time (e.g.\ -90.9\% on \scrci{the} Steam \scrci{dataset} compared to BERT4Rec)}, while \dd{also being} suitable for large-scale datasets. \srsf{We also showed that \ourloss{} may be beneficial for BERT4Rec if it is trained with negative sampling (e.g. +7.2\% compared to BCE when trained with 4 negatives).}
\dd{The theory and methods presented in this paper \scrci{could} also be \scrci{applied} to not just sequential recommendation models but also \scrci{to} other types of recommendation \scrci{as well as} to NLP or search tasks -- we leave these directions to future work.}
\FloatBarrier
\bibliographystyle{ACM-Reference-Format}
\balance
\bibliography{references}


\begin{thebibliography}{46}


\ifx \showCODEN    \undefined \def \showCODEN     #1{\unskip}     \fi
\ifx \showDOI      \undefined \def \showDOI       #1{#1}\fi
\ifx \showISBNx    \undefined \def \showISBNx     #1{\unskip}     \fi
\ifx \showISBNxiii \undefined \def \showISBNxiii  #1{\unskip}     \fi
\ifx \showISSN     \undefined \def \showISSN      #1{\unskip}     \fi
\ifx \showLCCN     \undefined \def \showLCCN      #1{\unskip}     \fi
\ifx \shownote     \undefined \def \shownote      #1{#1}          \fi
\ifx \showarticletitle \undefined \def \showarticletitle #1{#1}   \fi
\ifx \showURL      \undefined \def \showURL       {\relax}        \fi
\providecommand\bibfield[2]{#2}
\providecommand\bibinfo[2]{#2}
\providecommand\natexlab[1]{#1}
\providecommand\showeprint[2][]{arXiv:#2}

\bibitem[Acquavia et~al\mbox{.}(2023)]%
        {acquavia2023static}
\bibfield{author}{\bibinfo{person}{Antonio Acquavia}, \bibinfo{person}{Nicola
  Tonellotto}, {and} \bibinfo{person}{Craig Macdonald}.}
  \bibinfo{year}{2023}\natexlab{}.
\newblock \showarticletitle{Static Pruning for Multi-Representation Dense
  Retrieval}. In \bibinfo{booktitle}{\emph{Proc. {{DocEng}}}}.
\newblock


\bibitem[Burges(2010)]%
        {burgesRanknetLambdarankLambdamart2010}
\bibfield{author}{\bibinfo{person}{Christopher Burges}.}
  \bibinfo{year}{2010}\natexlab{}.
\newblock \showarticletitle{From Rank{N}et to Lambda{R}ank to Lambda{MART}:
  {{An}} Overview}.
\newblock \bibinfo{journal}{\emph{Learning}}  \bibinfo{volume}{11}
  (\bibinfo{date}{Jan.} \bibinfo{year}{2010}).
\newblock


\bibitem[Ca{\~n}amares and Castells(2020)]%
        {canamaresTargetItemSampling2020}
\bibfield{author}{\bibinfo{person}{Roc{\'i}o Ca{\~n}amares} {and}
  \bibinfo{person}{Pablo Castells}.} \bibinfo{year}{2020}\natexlab{}.
\newblock \showarticletitle{On {{Target Item Sampling}} in {{Offline
  Recommender System Evaluation}}}. In \bibinfo{booktitle}{\emph{Proc.
  {{RecSys}}}}. \bibinfo{pages}{259--268}.
\newblock


\bibitem[Chen et~al\mbox{.}({[n.\,d.]})]%
        {chenGeneratingNegativeSamples}
\bibfield{author}{\bibinfo{person}{Yongjun Chen}, \bibinfo{person}{Jia Li},
  \bibinfo{person}{Zhiwei Liu}, \bibinfo{person}{Nitish~Shirish Keskar},
  \bibinfo{person}{Huan Wang}, \bibinfo{person}{Julian McAuley}, {and}
  \bibinfo{person}{Caiming Xiong}.} \bibinfo{year}{[n.\,d.]}\natexlab{}.
\newblock \bibinfo{title}{Generating {{Negative Samples}} for {{Sequential
  Recommendation}}}.
\newblock
\newblock
\showeprint[arxiv]{2208.03645}~[cs]


\bibitem[Cho et~al\mbox{.}(2011)]%
        {Gowalla}
\bibfield{author}{\bibinfo{person}{Eunjoon Cho}, \bibinfo{person}{Seth~A.
  Myers}, {and} \bibinfo{person}{Jure Leskovec}.}
  \bibinfo{year}{2011}\natexlab{}.
\newblock \showarticletitle{Friendship and Mobility: User Movement in
  Location-Based Social Networks}. In \bibinfo{booktitle}{\emph{Proc.
  {{KDD}}}}. \bibinfo{pages}{1082}.
\newblock


\bibitem[Cormack et~al\mbox{.}(1999)]%
        {cormackEstimatingPrecisionRandom1999}
\bibfield{author}{\bibinfo{person}{Gordon~V. Cormack}, \bibinfo{person}{Ondrej
  Lhotak}, {and} \bibinfo{person}{Christopher~R. Palmer}.}
  \bibinfo{year}{1999}\natexlab{}.
\newblock \showarticletitle{Estimating Precision by Random Sampling}. In
  \bibinfo{booktitle}{\emph{Proc. {{SIGIR}}}}. \bibinfo{pages}{273--274}.
\newblock


\bibitem[Dallmann et~al\mbox{.}(2021)]%
        {dallmannCaseStudySampling2021}
\bibfield{author}{\bibinfo{person}{Alexander Dallmann}, \bibinfo{person}{Daniel
  Zoller}, {and} \bibinfo{person}{Andreas Hotho}.}
  \bibinfo{year}{2021}\natexlab{}.
\newblock \showarticletitle{A {{Case Study}} on {{Sampling Strategies}} for
  {{Evaluating Neural Sequential Item Recommendation Models}}}. In
  \bibinfo{booktitle}{\emph{Proc. {{RecSys}}}}. \bibinfo{pages}{505--514}.
\newblock


\bibitem[Devlin et~al\mbox{.}(2019)]%
        {BERT}
\bibfield{author}{\bibinfo{person}{Jacob Devlin}, \bibinfo{person}{Ming-Wei
  Chang}, \bibinfo{person}{Kenton Lee}, {and} \bibinfo{person}{Kristina
  Toutanova}.} \bibinfo{year}{2019}\natexlab{}.
\newblock \showarticletitle{{{BERT}}: {{Pre-training}} of {{Deep Bidirectional
  Transformers}} for {{Language Understanding}}}. In
  \bibinfo{booktitle}{\emph{Proc. of {{NAACL-HLT}}}}.
  \bibinfo{pages}{4171--4186}.
\newblock


\bibitem[Dodge et~al\mbox{.}(2021)]%
        {dodgeDocumentingLargeWebtext2021a}
\bibfield{author}{\bibinfo{person}{Jesse Dodge}, \bibinfo{person}{Maarten Sap},
  \bibinfo{person}{Ana Marasovi{\'c}}, \bibinfo{person}{William Agnew},
  \bibinfo{person}{Gabriel Ilharco}, \bibinfo{person}{Dirk Groeneveld},
  \bibinfo{person}{Margaret Mitchell}, {and} \bibinfo{person}{Matt Gardner}.}
  \bibinfo{year}{2021}\natexlab{}.
\newblock \showarticletitle{Documenting {{Large Webtext Corpora}}: {{A Case
  Study}} on the {{Colossal Clean Crawled Corpus}}}. In
  \bibinfo{booktitle}{\emph{Proc. {{EMNLP}}}}. \bibinfo{pages}{1286--1305}.
\newblock


\bibitem[Du et~al\mbox{.}(2022)]%
        {CBiT}
\bibfield{author}{\bibinfo{person}{Hanwen Du}, \bibinfo{person}{Hui Shi},
  \bibinfo{person}{Pengpeng Zhao}, \bibinfo{person}{Deqing Wang},
  \bibinfo{person}{Victor~S. Sheng}, \bibinfo{person}{Yanchi Liu},
  \bibinfo{person}{Guanfeng Liu}, {and} \bibinfo{person}{Lei Zhao}.}
  \bibinfo{year}{2022}\natexlab{}.
\newblock \showarticletitle{Contrastive {{Learning}} with {{Bidirectional
  Transformers}} for {{Sequential Recommendation}}}. In
  \bibinfo{booktitle}{\emph{Proc. {{CIKM}}}}. \bibinfo{pages}{396--405}.
\newblock


\bibitem[Goodfellow et~al\mbox{.}(2016)]%
        {goodfellowDeepLearning2016}
\bibfield{author}{\bibinfo{person}{Ian Goodfellow}, \bibinfo{person}{Yoshua
  Bengio}, {and} \bibinfo{person}{Aaron Courville}.}
  \bibinfo{year}{2016}\natexlab{}.
\newblock \bibinfo{booktitle}{\emph{Deep {{Learning}}}}.
\newblock \bibinfo{publisher}{{MIT Press}}.
\newblock


\bibitem[Harper and Konstan(2015)]%
        {harperMovieLensDatasetsHistory2015a}
\bibfield{author}{\bibinfo{person}{F.~Maxwell Harper} {and}
  \bibinfo{person}{Joseph~A. Konstan}.} \bibinfo{year}{2015}\natexlab{}.
\newblock \showarticletitle{The {{MovieLens Datasets}}: {{History}} and
  {{Context}}}.
\newblock \bibinfo{journal}{\emph{ACM Transactions on Interactive Intelligent
  Systems (TiiS)}} \bibinfo{volume}{5}, \bibinfo{number}{4}
  (\bibinfo{date}{Dec.} \bibinfo{year}{2015}), \bibinfo{pages}{19:1--19:19}.
\newblock
\showISSN{2160-6455}


\bibitem[Heaps(1978)]%
        {heapsInformationRetrievalComputational1978}
\bibfield{author}{\bibinfo{person}{H.~S. Heaps}.}
  \bibinfo{year}{1978}\natexlab{}.
\newblock \bibinfo{booktitle}{\emph{Information {{Retrieval}}:
  {{Computational}} and {{Theoretical Aspects}}}}.
\newblock \bibinfo{publisher}{{Academic Press, Inc.}},
  \bibinfo{address}{{USA}}.
\newblock
\showISBNx{978-0-12-335750-2}


\bibitem[Hidasi et~al\mbox{.}(2016)]%
        {GRU4Rec}
\bibfield{author}{\bibinfo{person}{Bal{\'a}zs Hidasi},
  \bibinfo{person}{Alexandros Karatzoglou}, \bibinfo{person}{Linas Baltrunas},
  {and} \bibinfo{person}{Domonkos Tikk}.} \bibinfo{year}{2016}\natexlab{}.
\newblock \showarticletitle{Session-Based {{Recommendations}} with {{Recurrent
  Neural Networks}}}. In \bibinfo{booktitle}{\emph{Proc. {{ICLR}}}}.
\newblock


\bibitem[Jean et~al\mbox{.}(2015)]%
        {jeanUsingVeryLarge2015}
\bibfield{author}{\bibinfo{person}{S{\'e}bastien Jean},
  \bibinfo{person}{Kyunghyun Cho}, \bibinfo{person}{Roland Memisevic}, {and}
  \bibinfo{person}{Yoshua Bengio}.} \bibinfo{year}{2015}\natexlab{}.
\newblock \showarticletitle{On {{Using Very Large Target Vocabulary}} for
  {{Neural Machine Translation}}}. In \bibinfo{booktitle}{\emph{Proc.
  {{ACL-IJCNLP}}}}. \bibinfo{pages}{1--10}.
\newblock


\bibitem[Ji et~al\mbox{.}(2020)]%
        {jiRevisitPopularityBaseline2020}
\bibfield{author}{\bibinfo{person}{Yitong Ji}, \bibinfo{person}{Aixin Sun},
  \bibinfo{person}{Jie Zhang}, {and} \bibinfo{person}{Chenliang Li}.}
  \bibinfo{year}{2020}\natexlab{}.
\newblock \showarticletitle{A {{Re-visit}} of the {{Popularity Baseline}} in
  {{Recommender Systems}}}. In \bibinfo{booktitle}{\emph{Proc. {{SIGIR}}}}.
  \bibinfo{pages}{1749--1752}.
\newblock


\bibitem[Kang and McAuley(2018)]%
        {SASRec}
\bibfield{author}{\bibinfo{person}{Wang-Cheng Kang} {and}
  \bibinfo{person}{Julian McAuley}.} \bibinfo{year}{2018}\natexlab{}.
\newblock \showarticletitle{Self-{{Attentive Sequential Recommendation}}}. In
  \bibinfo{booktitle}{\emph{Proc. {{ICDM}}}}. \bibinfo{pages}{197--206}.
\newblock
\showISSN{2374-8486}


\bibitem[Krichene and Rendle(2022)]%
        {kricheneSampledMetricsItem2022}
\bibfield{author}{\bibinfo{person}{Walid Krichene} {and}
  \bibinfo{person}{Steffen Rendle}.} \bibinfo{year}{2022}\natexlab{}.
\newblock \showarticletitle{On Sampled Metrics for Item Recommendation}.
\newblock \bibinfo{journal}{\emph{Commun. ACM}} \bibinfo{volume}{65},
  \bibinfo{number}{7} (\bibinfo{date}{June} \bibinfo{year}{2022}),
  \bibinfo{pages}{75--83}.
\newblock


\bibitem[Lan et~al\mbox{.}(2020)]%
        {ALBERT}
\bibfield{author}{\bibinfo{person}{Zhenzhong Lan}, \bibinfo{person}{Mingda
  Chen}, \bibinfo{person}{Sebastian Goodman}, \bibinfo{person}{Kevin Gimpel},
  \bibinfo{person}{Piyush Sharma}, {and} \bibinfo{person}{Radu Soricut}.}
  \bibinfo{year}{2020}\natexlab{}.
\newblock \showarticletitle{{{ALBERT}}: {{A Lite BERT}} for {{Self-supervised
  Learning}} of {{Language Representations}}}. In
  \bibinfo{booktitle}{\emph{Proc. {{ICLR}}}}.
\newblock


\bibitem[Lin et~al\mbox{.}(2022)]%
        {linPretrainedTransformersText2022}
\bibfield{author}{\bibinfo{person}{Jimmy Lin}, \bibinfo{person}{Rodrigo
  Nogueira}, {and} \bibinfo{person}{Andrew Yates}.}
  \bibinfo{year}{2022}\natexlab{}.
\newblock \bibinfo{booktitle}{\emph{Pretrained {{Transformers}} for {{Text
  Ranking}}: {{BERT}} and {{Beyond}}}}.
\newblock \bibinfo{publisher}{{Springer International Publishing}}.
\newblock


\bibitem[Murphy(2022)]%
        {murphyProbabilisticMachineLearning2022}
\bibfield{author}{\bibinfo{person}{Kevin~P. Murphy}.}
  \bibinfo{year}{2022}\natexlab{}.
\newblock \bibinfo{booktitle}{\emph{Probabilistic Machine Learning: An
  Introduction}}.
\newblock \bibinfo{publisher}{{The MIT Press}}, \bibinfo{address}{{Cambridge,
  Massachusetts}}.
\newblock
\showISBNx{978-0-262-04682-4}
\showLCCN{Q325.5 .M872 2022}


\bibitem[Pathak et~al\mbox{.}(2017)]%
        {pathakGeneratingPersonalizingBundle2017}
\bibfield{author}{\bibinfo{person}{Apurva Pathak}, \bibinfo{person}{Kshitiz
  Gupta}, {and} \bibinfo{person}{Julian McAuley}.}
  \bibinfo{year}{2017}\natexlab{}.
\newblock \showarticletitle{Generating and {{Personalizing Bundle
  Recommendations}} on {{{\emph{Steam}}}}}. In \bibinfo{booktitle}{\emph{Proc.
  {{SIGIR}}}}. \bibinfo{pages}{1073--1076}.
\newblock


\bibitem[Pellegrini et~al\mbox{.}(2022)]%
        {pellegriniDonRecommendObvious2022a}
\bibfield{author}{\bibinfo{person}{Roberto Pellegrini}, \bibinfo{person}{Wenjie
  Zhao}, {and} \bibinfo{person}{Iain Murray}.} \bibinfo{year}{2022}\natexlab{}.
\newblock \showarticletitle{Don't Recommend the Obvious: Estimate Probability
  Ratios}. In \bibinfo{booktitle}{\emph{Proc. {{RecSys}}}}.
  \bibinfo{pages}{188--197}.
\newblock


\bibitem[Petrov and Macdonald(2022a)]%
        {PetrovRSS22}
\bibfield{author}{\bibinfo{person}{Aleksandr Petrov} {and}
  \bibinfo{person}{Craig Macdonald}.} \bibinfo{year}{2022}\natexlab{a}.
\newblock \showarticletitle{Effective and {{Efficient Training}} for
  {{Sequential Recommendation}} Using {{Recency Sampling}}}. In
  \bibinfo{booktitle}{\emph{Proc. {{RecSys}}}}. \bibinfo{pages}{81--91}.
\newblock


\bibitem[Petrov and Macdonald(2022b)]%
        {Bert4RecRepro}
\bibfield{author}{\bibinfo{person}{Aleksandr Petrov} {and}
  \bibinfo{person}{Craig Macdonald}.} \bibinfo{year}{2022}\natexlab{b}.
\newblock \showarticletitle{A {{Systematic Review}} and {{Replicability Study}}
  of {{BERT4Rec}} for {{Sequential Recommendation}}}. In
  \bibinfo{booktitle}{\emph{Proc. {{RecSys}}}}. \bibinfo{pages}{436--447}.
\newblock


\bibitem[Petrov and Macdonald(2023a)]%
        {petrov2023rss}
\bibfield{author}{\bibinfo{person}{Aleksandr Petrov} {and}
  \bibinfo{person}{Craig Macdonald}.} \bibinfo{year}{2023}\natexlab{a}.
\newblock \showarticletitle{RSS: Effective and Efficient Training for
  Sequential Recommendation Using Recency Sampling}.
\newblock \bibinfo{journal}{\emph{ACM Transactions on Recommender Systems
  (TORS)}} (\bibinfo{year}{2023}).
\newblock


\bibitem[Petrov and Macdonald(2023b)]%
        {petrov2023generative}
\bibfield{author}{\bibinfo{person}{Aleksandr~V Petrov} {and}
  \bibinfo{person}{Craig Macdonald}.} \bibinfo{year}{2023}\natexlab{b}.
\newblock \showarticletitle{Generative Sequential Recommendation with GPTRec}.
  In \bibinfo{booktitle}{\emph{Proc. GenIR@SIGIR}}.
\newblock


\bibitem[Pham et~al\mbox{.}(2019)]%
        {phamCombinationAnalyticSignal2019}
\bibfield{author}{\bibinfo{person}{Luan~Thanh Pham}, \bibinfo{person}{Erdinc
  Oksum}, \bibinfo{person}{Thanh~Duc Do}, \bibinfo{person}{Minh {Le-Huy}},
  \bibinfo{person}{Minh~Duc Vu}, {and} \bibinfo{person}{Vinh~Duc Nguyen}.}
  \bibinfo{year}{2019}\natexlab{}.
\newblock \showarticletitle{{{LAS}}: {{A}} Combination of the Analytic Signal
  Amplitude and the Generalised Logistic Function as a Novel Edge Enhancement
  of Magnetic Data}.
\newblock \bibinfo{journal}{\emph{Contributions to Geophysics \& Geodesy}}
  \bibinfo{volume}{49}, \bibinfo{number}{4} (\bibinfo{year}{2019}),
  \bibinfo{pages}{425--440}.
\newblock


\bibitem[Qiu et~al\mbox{.}(2022)]%
        {DuoRec}
\bibfield{author}{\bibinfo{person}{Ruihong Qiu}, \bibinfo{person}{Zi Huang},
  \bibinfo{person}{Hongzhi Yin}, {and} \bibinfo{person}{Zijian Wang}.}
  \bibinfo{year}{2022}\natexlab{}.
\newblock \showarticletitle{Contrastive {{Learning}} for {{Representation
  Degeneration Problem}} in {{Sequential Recommendation}}}. In
  \bibinfo{booktitle}{\emph{Proc. {{WSDM}}}}. \bibinfo{pages}{813--823}.
\newblock


\bibitem[Rendle(2022)]%
        {rendleItemRecommendationImplicit2022}
\bibfield{author}{\bibinfo{person}{Steffen Rendle}.}
  \bibinfo{year}{2022}\natexlab{}.
\newblock \showarticletitle{Item {{Recommendation}} from {{Implicit
  Feedback}}}.
\newblock In \bibinfo{booktitle}{\emph{Recommender {{Systems Handbook}}}}.
  \bibinfo{pages}{143--171}.
\newblock


\bibitem[Rendle and Freudenthaler(2014)]%
        {rendleImprovingPairwiseLearning2014}
\bibfield{author}{\bibinfo{person}{Steffen Rendle} {and}
  \bibinfo{person}{Christoph Freudenthaler}.} \bibinfo{year}{2014}\natexlab{}.
\newblock \showarticletitle{Improving Pairwise Learning for Item Recommendation
  from Implicit Feedback}. In \bibinfo{booktitle}{\emph{Proc. {{WSDM}}}}.
\newblock


\bibitem[Rendle et~al\mbox{.}(2009)]%
        {rendleBPRBayesianPersonalized2009}
\bibfield{author}{\bibinfo{person}{Steffen Rendle}, \bibinfo{person}{Christoph
  Freudenthaler}, \bibinfo{person}{Zeno Gantner}, {and} \bibinfo{person}{Lars
  {Schmidt-Thieme}}.} \bibinfo{year}{2009}\natexlab{}.
\newblock \showarticletitle{{{BPR}}: {{Bayesian Personalized Ranking}} from
  {{Implicit Feedback}}}. In \bibinfo{booktitle}{\emph{Proc. {{UAI}}}}.
\newblock


\bibitem[Rendle et~al\mbox{.}(2010)]%
        {FPMC}
\bibfield{author}{\bibinfo{person}{Steffen Rendle}, \bibinfo{person}{Christoph
  Freudenthaler}, {and} \bibinfo{person}{Lars {Schmidt-Thieme}}.}
  \bibinfo{year}{2010}\natexlab{}.
\newblock \showarticletitle{Factorizing Personalized {{Markov}} Chains for
  Next-Basket Recommendation}. In \bibinfo{booktitle}{\emph{Proc. {{WWW}}}}.
  \bibinfo{pages}{811}.
\newblock


\bibitem[Richards(1959)]%
        {richardsFlexibleGrowthFunction1959}
\bibfield{author}{\bibinfo{person}{F.~J. Richards}.}
  \bibinfo{year}{1959}\natexlab{}.
\newblock \showarticletitle{A {{Flexible Growth Function}} for {{Empirical
  Use}}}.
\newblock \bibinfo{journal}{\emph{Journal of Experimental Botany}}
  \bibinfo{volume}{10}, \bibinfo{number}{2} (\bibinfo{year}{1959}),
  \bibinfo{pages}{290--301}.
\newblock
\showISSN{0022-0957, 1460-2431}


\bibitem[Stolcke(1998)]%
        {stolckeEntropybasedPruningBackoff1998}
\bibfield{author}{\bibinfo{person}{A. Stolcke}.}
  \bibinfo{year}{1998}\natexlab{}.
\newblock \showarticletitle{Entropy-Based {{Pruning}} of {{Backoff Language
  Models}}}. In \bibinfo{booktitle}{\emph{Proc. {{Broadcast News Tanscription}}
  and {{Understanding Workshop}}}}. \bibinfo{pages}{270--274}.
\newblock


\bibitem[Sun et~al\mbox{.}(2019)]%
        {BERT4Rec}
\bibfield{author}{\bibinfo{person}{Fei Sun}, \bibinfo{person}{Jun Liu},
  \bibinfo{person}{Jian Wu}, \bibinfo{person}{Changhua Pei},
  \bibinfo{person}{Xiao Lin}, \bibinfo{person}{Wenwu Ou}, {and}
  \bibinfo{person}{Peng Jiang}.} \bibinfo{year}{2019}\natexlab{}.
\newblock \showarticletitle{{{BERT4Rec}}: {{Sequential Recommendation}} with
  {{Bidirectional Encoder Representations}} from {{Transformer}}}. In
  \bibinfo{booktitle}{\emph{Proc. {{CIKM}}}}. \bibinfo{pages}{1441--1450}.
\newblock


\bibitem[Tang and Wang(2018)]%
        {Caser}
\bibfield{author}{\bibinfo{person}{Jiaxi Tang} {and} \bibinfo{person}{Ke
  Wang}.} \bibinfo{year}{2018}\natexlab{}.
\newblock \showarticletitle{Personalized {{Top-N Sequential Recommendation}}
  via {{Convolutional Sequence Embedding}}}. In \bibinfo{booktitle}{\emph{Proc.
  {{WSDM}}}}. \bibinfo{pages}{565--573}.
\newblock


\bibitem[Vaswani et~al\mbox{.}(2017)]%
        {Transformer}
\bibfield{author}{\bibinfo{person}{Ashish Vaswani}, \bibinfo{person}{Noam
  Shazeer}, \bibinfo{person}{Niki Parmar}, \bibinfo{person}{Jakob Uszkoreit},
  \bibinfo{person}{Llion Jones}, \bibinfo{person}{Aidan~N Gomez},
  \bibinfo{person}{{\L}ukasz Kaiser}, {and} \bibinfo{person}{Illia
  Polosukhin}.} \bibinfo{year}{2017}\natexlab{}.
\newblock \showarticletitle{Attention Is {{All}} You {{Need}}}. In
  \bibinfo{booktitle}{\emph{Proc. {{NeurIPS}}}}.
\newblock


\bibitem[Wei et~al\mbox{.}(2022)]%
        {weiMitigatingNeuralNetwork2022a}
\bibfield{author}{\bibinfo{person}{Hongxin Wei}, \bibinfo{person}{Renchunzi
  Xie}, \bibinfo{person}{Hao Cheng}, \bibinfo{person}{Lei Feng},
  \bibinfo{person}{Bo An}, {and} \bibinfo{person}{Yixuan Li}.}
  \bibinfo{year}{2022}\natexlab{}.
\newblock \showarticletitle{Mitigating {{Neural Network Overconfidence}} with
  {{Logit Normalization}}}. In \bibinfo{booktitle}{\emph{Proc. {{ICML}}}}.
\newblock
\showISSN{2640-3498}


\bibitem[Weston et~al\mbox{.}(2011)]%
        {westonWsabieScalingLarge2011}
\bibfield{author}{\bibinfo{person}{Jason Weston}, \bibinfo{person}{Samy
  Bengio}, {and} \bibinfo{person}{Nicolas Usunier}.}
  \bibinfo{year}{2011}\natexlab{}.
\newblock \showarticletitle{Wsabie: {{Scaling Up To Large Vocabulary Image
  Annotation}}}. In \bibinfo{booktitle}{\emph{Proc. {{IJCAI}}}}.
\newblock


\bibitem[Wu et~al\mbox{.}(2022)]%
        {wuEffectivenessSampledSoftmax2022}
\bibfield{author}{\bibinfo{person}{Jiancan Wu}, \bibinfo{person}{Xiang Wang},
  \bibinfo{person}{Xingyu Gao}, \bibinfo{person}{Jiawei Chen},
  \bibinfo{person}{Hongcheng Fu}, \bibinfo{person}{Tianyu Qiu}, {and}
  \bibinfo{person}{Xiangnan He}.} \bibinfo{year}{2022}\natexlab{}.
\newblock \bibinfo{title}{On the {{Effectiveness}} of {{Sampled Softmax Loss}}
  for {{Item Recommendation}}}.
\newblock
\newblock
\showeprint[arxiv]{2201.02327}~[cs]


\bibitem[Wu et~al\mbox{.}(2016)]%
        {wuGoogleNeuralMachine2016}
\bibfield{author}{\bibinfo{person}{Yonghui Wu}, \bibinfo{person}{Mike
  Schuster}, \bibinfo{person}{Zhifeng Chen}, \bibinfo{person}{Quoc~V. Le},
  \bibinfo{person}{Mohammad Norouzi}, \bibinfo{person}{Wolfgang Macherey},
  \bibinfo{person}{Maxim Krikun}, \bibinfo{person}{Yuan Cao},
  \bibinfo{person}{Qin Gao}, \bibinfo{person}{Klaus Macherey},
  \bibinfo{person}{Jeff Klingner}, \bibinfo{person}{Apurva Shah},
  \bibinfo{person}{Melvin Johnson}, \bibinfo{person}{Xiaobing Liu},
  \bibinfo{person}{{\L}ukasz Kaiser}, \bibinfo{person}{Stephan Gouws},
  \bibinfo{person}{Yoshikiyo Kato}, \bibinfo{person}{Taku Kudo},
  \bibinfo{person}{Hideto Kazawa}, \bibinfo{person}{Keith Stevens},
  \bibinfo{person}{George Kurian}, \bibinfo{person}{Nishant Patil},
  \bibinfo{person}{Wei Wang}, \bibinfo{person}{Cliff Young},
  \bibinfo{person}{Jason Smith}, \bibinfo{person}{Jason Riesa},
  \bibinfo{person}{Alex Rudnick}, \bibinfo{person}{Oriol Vinyals},
  \bibinfo{person}{Greg Corrado}, \bibinfo{person}{Macduff Hughes}, {and}
  \bibinfo{person}{Jeffrey Dean}.} \bibinfo{year}{2016}\natexlab{}.
\newblock \bibinfo{title}{Google's {{Neural Machine Translation System}}:
  {{Bridging}} the {{Gap}} between {{Human}} and {{Machine Translation}}}.
\newblock
\newblock
\showeprint[arxiv]{1609.08144}~[cs]


\bibitem[Xie et~al\mbox{.}(2022)]%
        {xieContrastiveLearningSequential2022}
\bibfield{author}{\bibinfo{person}{Xu Xie}, \bibinfo{person}{Fei Sun},
  \bibinfo{person}{Zhaoyang Liu}, \bibinfo{person}{Shiwen Wu},
  \bibinfo{person}{Jinyang Gao}, \bibinfo{person}{Jiandong Zhang},
  \bibinfo{person}{Bolin Ding}, {and} \bibinfo{person}{Bin Cui}.}
  \bibinfo{year}{2022}\natexlab{}.
\newblock \showarticletitle{Contrastive {{Learning}} for {{Sequential
  Recommendation}}}. In \bibinfo{booktitle}{\emph{Proc. {{ICDE}}}}.
  \bibinfo{pages}{1259--1273}.
\newblock


\bibitem[Yuan et~al\mbox{.}(2016)]%
        {yuanLambdaFMLearningOptimal2016}
\bibfield{author}{\bibinfo{person}{Fajie Yuan}, \bibinfo{person}{Guibing Guo},
  \bibinfo{person}{Joemon~M. Jose}, \bibinfo{person}{Long Chen},
  \bibinfo{person}{Haitao Yu}, {and} \bibinfo{person}{Weinan Zhang}.}
  \bibinfo{year}{2016}\natexlab{}.
\newblock \showarticletitle{{{LambdaFM}}: {{Learning Optimal Ranking}} with
  {{Factorization Machines Using Lambda Surrogates}}}. In
  \bibinfo{booktitle}{\emph{Proc. {{CIKM}}}}. \bibinfo{pages}{227--236}.
\newblock


\bibitem[Yuan et~al\mbox{.}(2019)]%
        {yuanSimpleConvolutionalGenerative2019}
\bibfield{author}{\bibinfo{person}{Fajie Yuan}, \bibinfo{person}{Alexandros
  Karatzoglou}, \bibinfo{person}{Ioannis Arapakis}, \bibinfo{person}{Joemon~M.
  Jose}, {and} \bibinfo{person}{Xiangnan He}.} \bibinfo{year}{2019}\natexlab{}.
\newblock \showarticletitle{A {{Simple Convolutional Generative Network}} for
  {{Next Item Recommendation}}}. In \bibinfo{booktitle}{\emph{Proc. {{WSDM}}}}.
  \bibinfo{pages}{582--590}.
\newblock


\bibitem[Zhou et~al\mbox{.}(2020)]%
        {zhouS3RecSelfSupervisedLearning2020}
\bibfield{author}{\bibinfo{person}{Kun Zhou}, \bibinfo{person}{Hui Wang},
  \bibinfo{person}{Wayne~Xin Zhao}, \bibinfo{person}{Yutao Zhu},
  \bibinfo{person}{Sirui Wang}, \bibinfo{person}{Fuzheng Zhang},
  \bibinfo{person}{Zhongyuan Wang}, {and} \bibinfo{person}{Ji-Rong Wen}.}
  \bibinfo{year}{2020}\natexlab{}.
\newblock \showarticletitle{S3-{{Rec}}: {{Self-Supervised Learning}} for
  {{Sequential Recommendation}} with {{Mutual Information Maximization}}}. In
  \bibinfo{booktitle}{\emph{Proc. {{CIKM}}}}. \bibinfo{pages}{1893--1902}.
\newblock


\end{thebibliography}
   
\end{document}